\newtheorem{proposition}{Proposition}
\newtheorem{lemma}{Lemma}
\newsavebox{\tempfig}
\begin{document}



\title{The `Brazil-nut effect' in bidisperse particle laden flow on an incline} 

\author{Jack Luong}
 \email{Contact author: jqluong@math.ucla.edu}
\affiliation{%
\\
 Department of Mathematics, University of California Los Angeles, Los Angeles 90095, CA, United States of America
}%
\author{Sarah Cassie Burnett}%
 \email{Contact author: burnett@math.ucla.edu}
\affiliation{%
\\
 Department of Mathematics, University of California Los Angeles, Los Angeles 90095, CA, United States of America
}%
\author{Andrea L. Bertozzi}
 \email{Contact author: bertozzi@math.ucla.edu}
\affiliation{%
 Department of Mathematics and Mechanical and Aerospace Engineering, University of California Los Angeles, Los Angeles 90095, CA, United States of America
}%

\begin{abstract}
 We study bidisperse suspensions---suspensions where there are two particle species of the same density but different sizes---of a viscous fluid on an incline. We 
 use a lubrication theory/thin film model
 to form a hyperbolic system of three conservation laws for the height and particle volume fractions. 
 The model predicts, over a range of parameters, that the larger particles rise to the top of the layer, consistent with the well-known `Brazil-nut effect' for granular media. The model predicts well-separated fronts of the two species of particles, behind a clear fluid front, at lower inclination angles and volume fractions. This corresponds to a triple shock structure in the system of conservations. At higher inclination angles and volume fractions the particles congregate at a high concentration at the leading front corresponding to a singular shock in the model. We find excellent agreement between theory and experiments in terms of the overall dynamic structures as the parameters vary.
 
\end{abstract}

\maketitle
\section{Introduction}\label{sec:Introduction}
Particle laden flows, mixtures of particles and fluid under the influence of forces, arise in many different manufacturing, environmental, and industrial applications. Granular flows on an incline are well studied for their impact on debris flow in large-scale phenomena such as avalanches and landslides \cite{santangelo2021new} and are also modeled at a fundamental level, with theoretical interest in their rheology \cite{guazzelli2024rheology} and particle segregation \cite{larcher2019influence}. The study of particle laden flows are also of interest to the food industry with the Bostwick consistometer \cite{balmforth_viscoplastic_2007, mouquet_characterization_2006, tehrani_modification_2007}, a device used to determine the viscosity of food products. Spiral separators are used in the mining industry to separate wet mixtures of particles based on density and/or particle size \cite{WrightpatentEU, dehaine_modelling_2016, holland-batt_particle_1991}. 

In this work, we consider particles in a viscous layer of fluid on an incline.
In particular, we examine a slurry with two negatively buoyant particle species of the same density but different sizes. The particles are sufficiently large to avoid being influenced primarily by Brownian motion, yet small enough to stay suspended and maintain the continuum nature of the fluid. We are interested in understanding how the particles naturally segregate by size over time as they move down the inclined geometry.
It is well-known for dry, granular flows that many-body interactions between different-sized particles, under gravity, often result in the larger particles rising to the top as demonstrated in Fig.~\ref{fig:brazil}. This effect is colloquially known as the `Brazil-nut effect'. Investigation into the Brazil-nut effect began with the seminal work of Rosato et.~al in \cite{rosato_why_1987} where the authors attribute the Brazil-nut effect to the ability of smaller particles to fit into the voids surrounding larger particles---spaces that larger particles themselves cannot occupy. The authors of \cite{jullien_mechanism_1990} further this argument by considering a three dimensional geometry and arguing smaller particles roll over larger ones. Many other mechanisms have been proposed since then. The authors in \cite{knight_vibration-induced_1993} present particles shaken in a cylinder follow a convective cycle up the middle of the container and then travel down the walls. The larger particles are trapped when they reach the top of the convection zone, while the smaller ones are able to convect freely. A different study \cite{mobius_size_2001} indicates particle density as a factor affecting sedimentation. Another proposed mechanism is granular temperature, the velocity variance, as the segregating mechanism that behaves analogously to thermal temperature as discussed in \cite{fan_phase_2011}. Numerous experimental work such as \cite{knight_experimental_1996} and \cite{gajjar_size_2021} support these investigations into the Brazil-nut effect. 
\begin{figure}
 \subfloat[Before shaking]{\label{subfig:brazil_before}
 \includegraphics[width=0.35\linewidth]{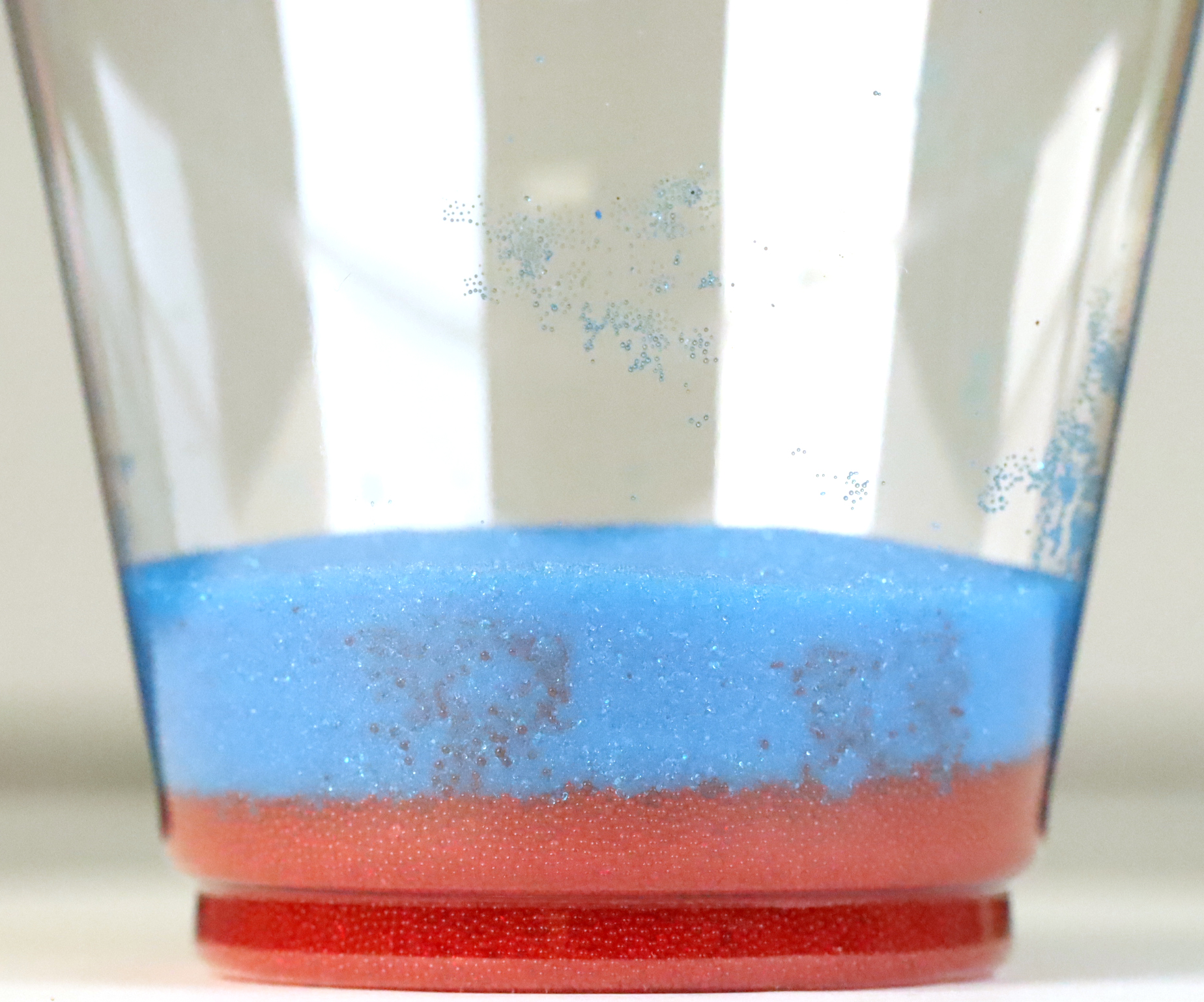}} ~
 \subfloat[After shaking]{\label{subfig:brazil_after}
 \includegraphics[width=0.35\linewidth]{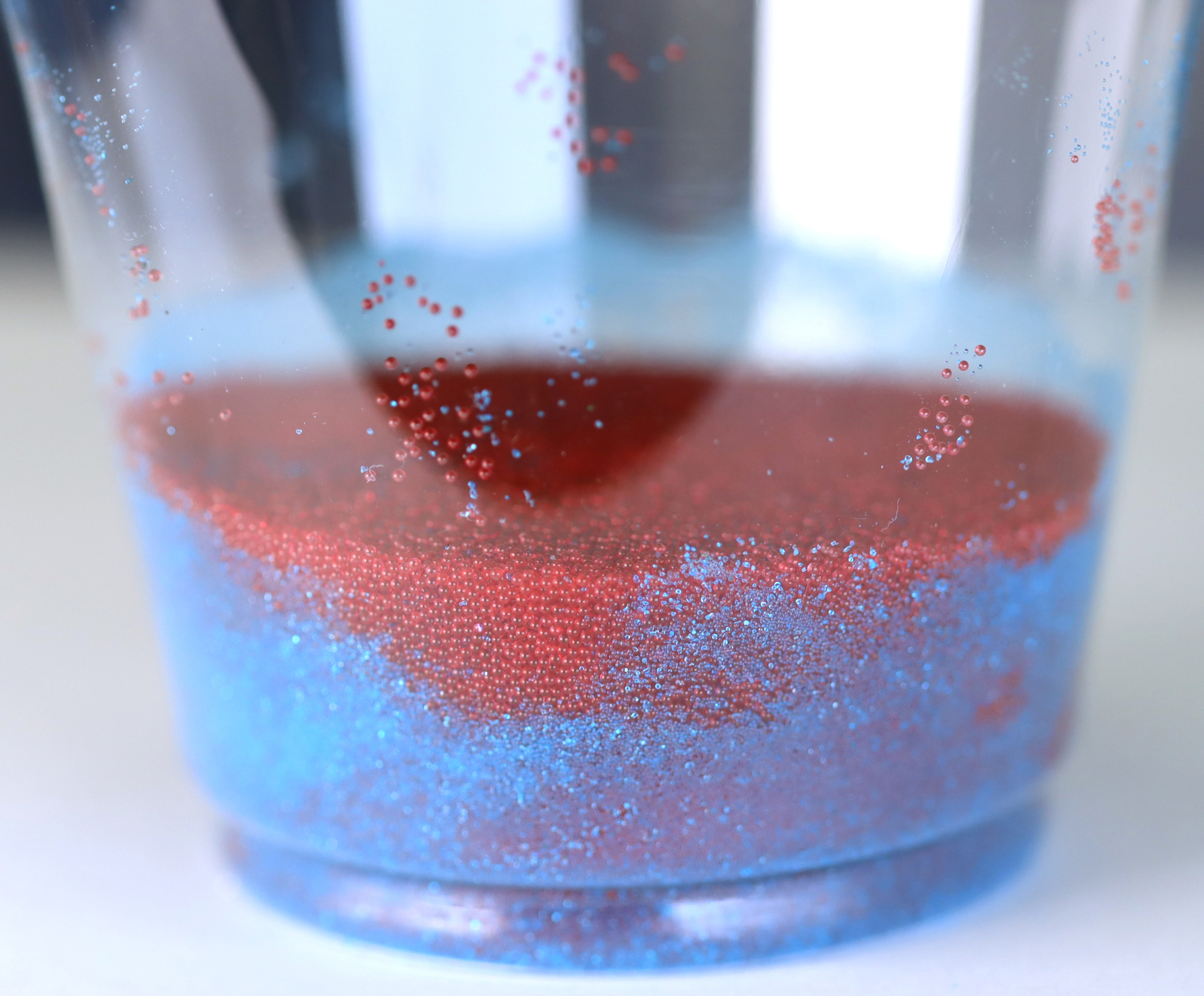}} \\
 \subfloat[Before settling]{\label{subfig:stokes_before}
 \includegraphics[width=0.35\linewidth]{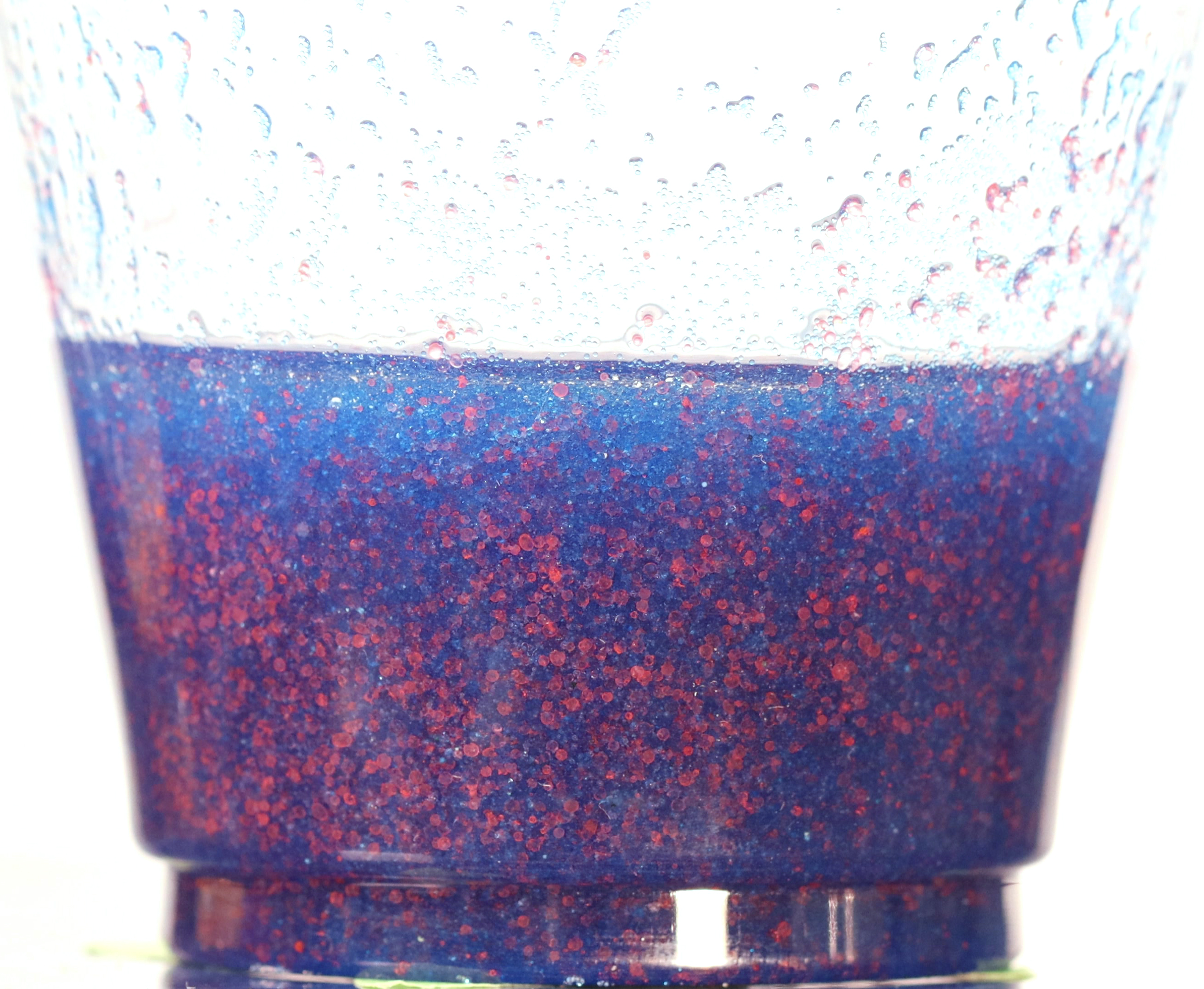}} ~
 \subfloat[After settling]{\label{subfig:stokes_after}
 \includegraphics[width=0.35\linewidth]{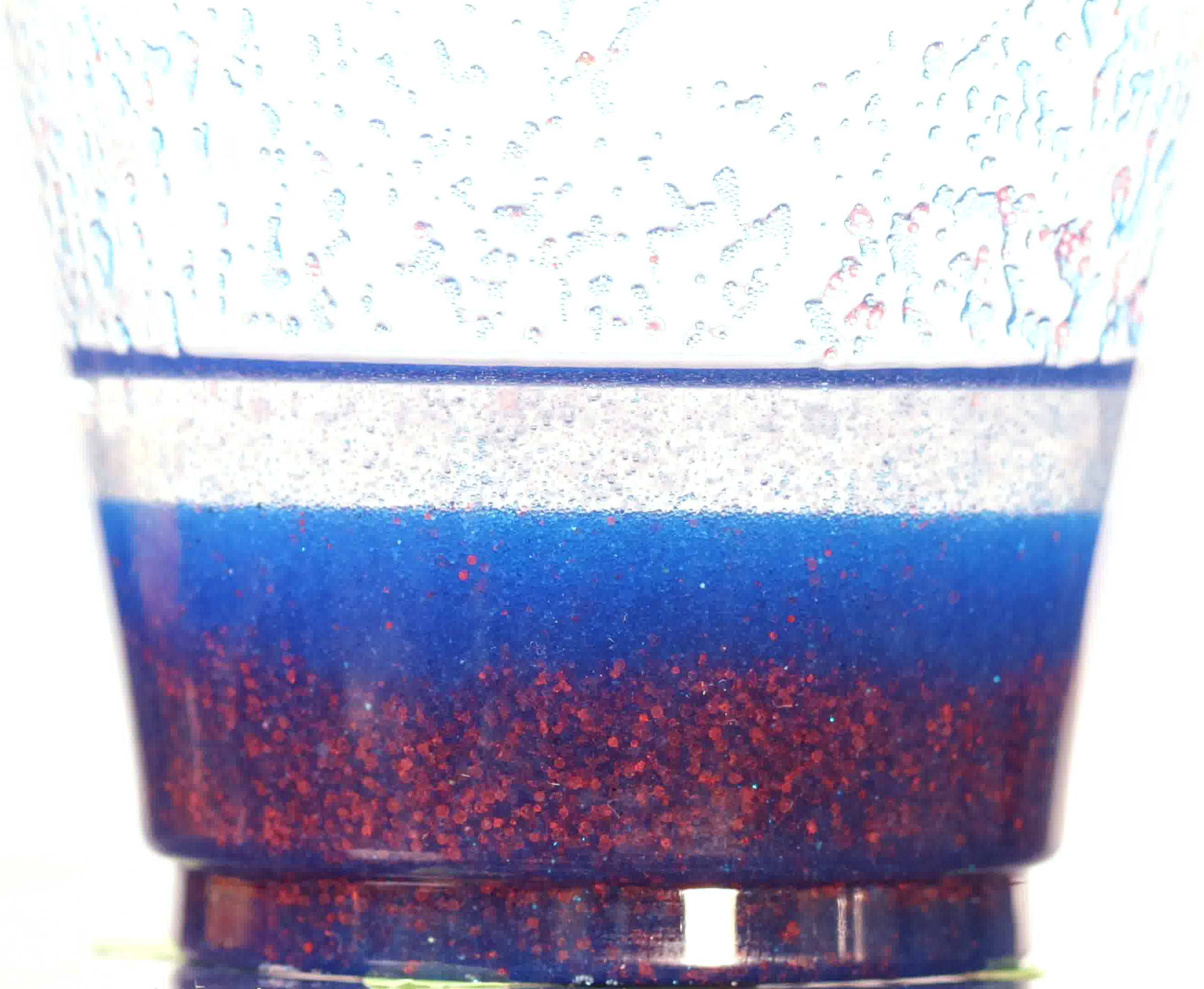}}
\caption{Demonstration of different effects achievable in bidisperse mixtures of particles.  In Fig.~\ref{subfig:brazil_before}, equal volumes of red and blue glass particles are added to a cup, with the blue particles layered above the red ones. The red glass particles are larger and have $0.6$ mm diameter. The blue glass particles are smaller and have $0.3$mm diameter. When vigorously shaken in all directions for 3 minutes, the larger particles rise to the top as shown in Fig.~\ref{subfig:brazil_after}---an example of the Brazil-nut effect. On the other hand, when the same particles are suspended in a viscous fluid, as in Fig.~\ref{subfig:stokes_before}, mixed well, and then allowed to settle, the larger particles settle at a faster rate than the smaller ones. After 15 minutes of settling, there's a distinct layer of blue smaller particles on top in Fig.~\ref{subfig:stokes_after} \cite{burnett2024gfm}.}
\label{fig:brazil}
\end{figure}
\par The Brazil-nut effect has also been studied in granular flows on an incline. The authors of \cite{savage_particle_1988} attribute the Brazil-nut effect to the same void filling mechanism as in \cite{rosato_why_1987} and contact forces squeezing particles out of their layer. In a high speed setting, the authors of \cite{neveu_particle_2022} use gradients of granular temperature to explain particle migration. Understanding of the Brazil-nut effect is of interest to the mining and pharmaceutical industry as the tendency for large particles to aggregate at the top prevents well mixing \cite{johanson_j_r_particle_1978}.

\par We contrast this phenomena with Stokes' law, where larger spheres in a viscous fluid moving at a very low particle Reynolds number ($\text{Re} = \rho_\ell d^2 \dot{\gamma} / (4\mu_\ell) $) \cite{murisic_dynamics_2013} move faster than smaller spheres as seen in Figs.~\ref{subfig:stokes_before} and \ref{subfig:stokes_after}. We find in bidisperse particle laden flow the larger particles rise to the top like in granular flows and unlike in Stokes flow, as seen in Fig.~\ref{subfig:sideprofile}. Here we develop a continuum model for the dynamic motion of particles (according to volume fraction) that explains why we observe the Brazil-nut effect in our bidisperse slurry flow.

\begin{figure}[h]
 \centering
 \subfloat[Side Profile\label{subfig:sideprofile}]{\includegraphics[width=0.7\linewidth]{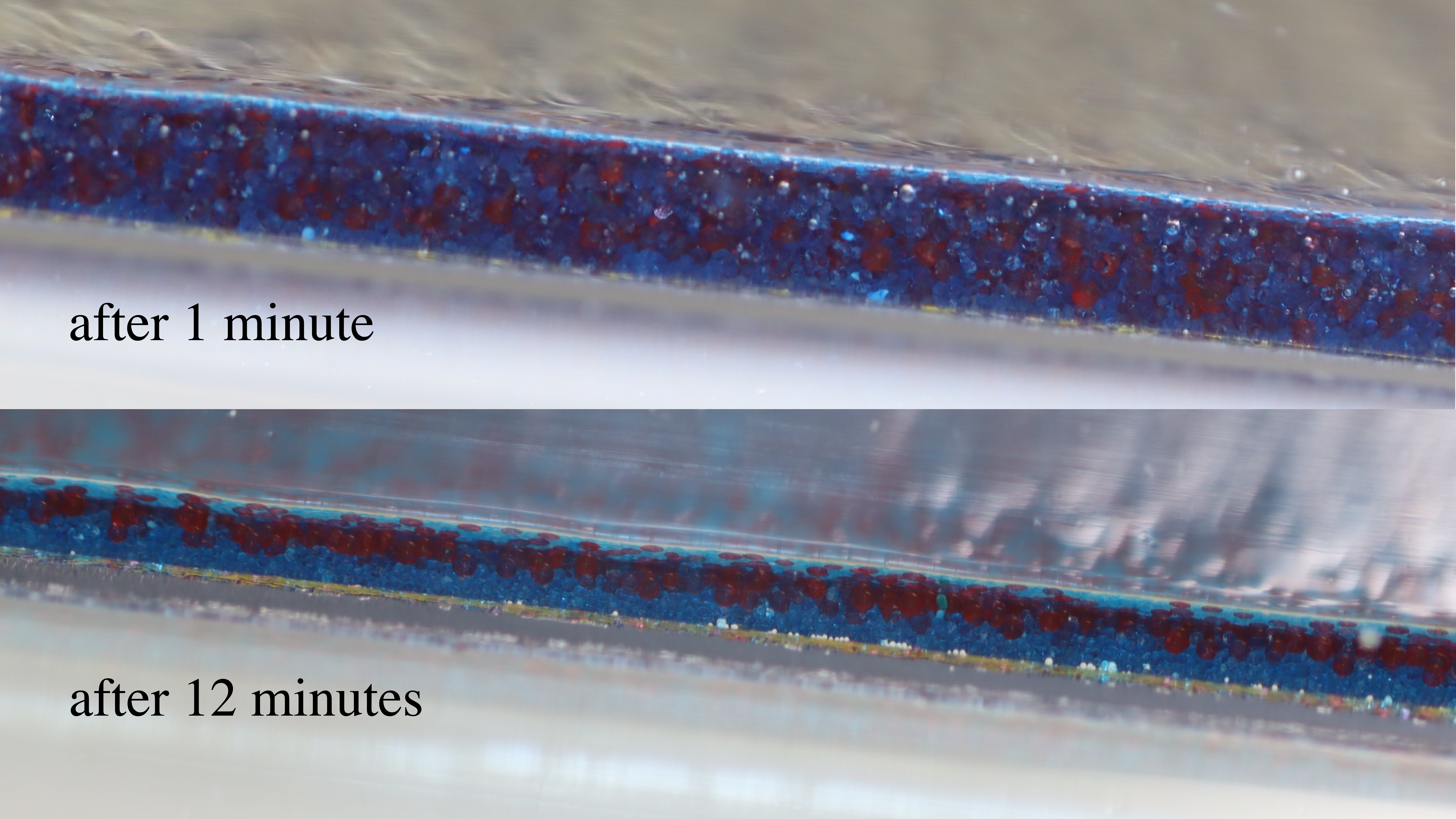}} \hspace{1em}
 \subfloat[Aerial Profile\label{subfig:settled}]{\includegraphics[width=0.25\linewidth]{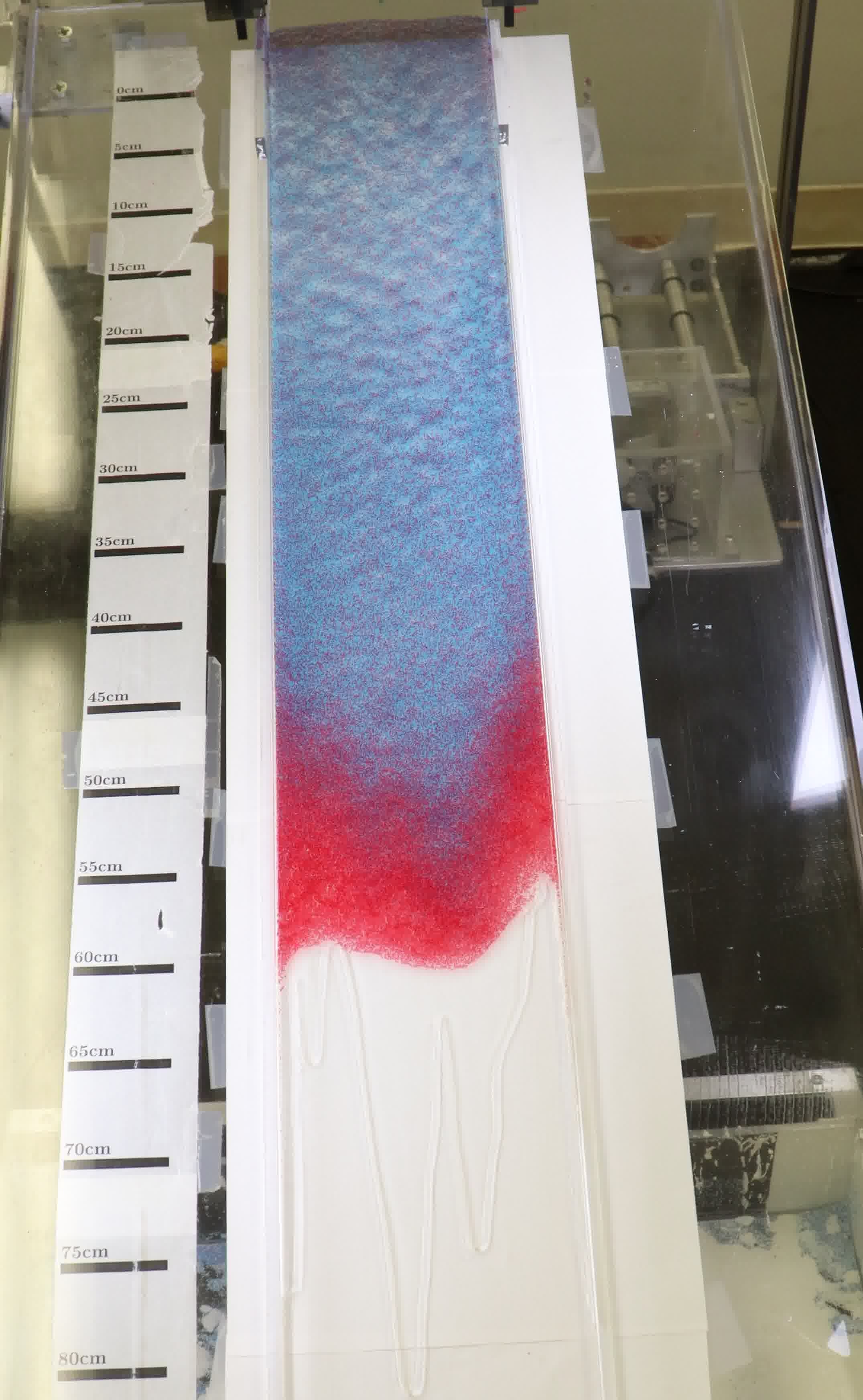}}
 \caption{Side profile of an experiment in the settled regime and the bird's eye perspective of the flow. The red particles are the larger particles.}
 \label{fig:sideprofile}
\end{figure}

Previous work on thin-film slurries has built on a classical model for 
clear fluid flow on an incline, studied by 
Huppert in \cite{huppert1982flow}.
He used a conservation law for the flow, arising from lubrication theory, and derived analytic, self-similar, formulas for the front speed and fluid height for the finite volume problem. The authors of \cite{zhou2005theory} then investigated the same geometry with a fluid containing one species of non-neutrally buoyant particles of identical size and density. In that work they included hindered settling of particles in the model. Cook \cite{cook2008theory} further advanced the model by including shear-induced migration, which quantitatively explained a natural bifurcation in the experiment in which the particles preferentially settle to the substrate or congregate at the leading front depending on the volume fraction and inclination angle. This model was quantitatively validated by detailed experiments in \cite{murisic_particle-laden_2011}. The work \cite{murisic_dynamics_2013} develops a quantitative dynamic lubrication model that shows excellent quantitative agreement with dynamic experiments in the settled regime.
Using a similar framework, \cite{wong_conservation_2016} extended the model in \cite{murisic_dynamics_2013} to the case of two particle species of the same size but different densities.
That work followed the equilibrium theory derivation in \cite{lee_equilibrium_2015} to derive the dynamic conservation laws. As in the earlier works, the dynamic model assumes rapid equilibration of the stratification of particles in the normal direction to the substrate, compared with the speed of the front. 
 
Other authors have studied bidisperse particle laden flow of two different sizes in other geometries. The authors of \cite{shauly_shear-induced_1998} model shear-induced migration of polydisperse suspensions and introduce an dynamic expression for the maximum packing fraction in the bidisperse case depending on the relative diameters of the particles and their volume fractions. In \cite{kanehl_hydrodynamic_2015}, the authors examine the case of bidisperse suspensions in channel flow and expand upon the modeling of the shear-induced migration flux term for the bidisperse case. In both of these works the authors find the particles segregate within the flow by size. The authors of \cite{howard_bidisperse_2022} approach the modeling of neutrally buoyant bidisperse particle laden flow in a channel using the suspension balance model. Since we consider negatively buoyant bidisperse particle laden flow down a slope, we include gravitational settling in our analysis which leads to more varied qualitative outcomes. In contrast, the authors of \cite{thornton_three-phase_2006} use the modeling techniques established in the granular convection literature study bidisperse particle laden flow with a generic interstitial fluid. However, they make the simplifying assumption that the particle volume fraction is constant. In the more complex spiral geometry, particle species separate by density through a similar thin-film model, with long-time behavior analyzed once the cross-sectional flow has equilibrated \cite{ding_equilibrium_2025}.

This paper is organized as follows.
In  Sec.~\ref{sec:model}, we derive the conservation law model that governs how the slurry develops as it flows down the incline. We also present the equilibrium model that dictates how the particle species and fluids equilibriate in the direction normal to the incline. In  Sec.~\ref{sec:analysis}, we analyze the bidisperse equilibrium model which yields the counterintuitive result of coarser particles accumulating near the surface and finer particles settling at the bottom. In  Sec.~\ref{sec:experiment}, we go over the experimental methods and materials. In  Sec.~\ref{sec:discussion}, we qualitatively compare numerical simulations of our dynamic model to laboratory experiments, highlighting the similar features we observe in both. In  Sec.~\ref{sec:conclusion}, we finish with concluding remarks and potential future directions.

\section{Model}\label{sec:model}
In this section, we develop a system of conservation laws in the continuum limit that describe the evolution of the film height and particle concentration in the slurry as it slides down the incline. Figure \ref{fig:diagram} shows the geometry of the problem.
\begin{figure}
 \centering
 \includegraphics[width=0.5\linewidth]{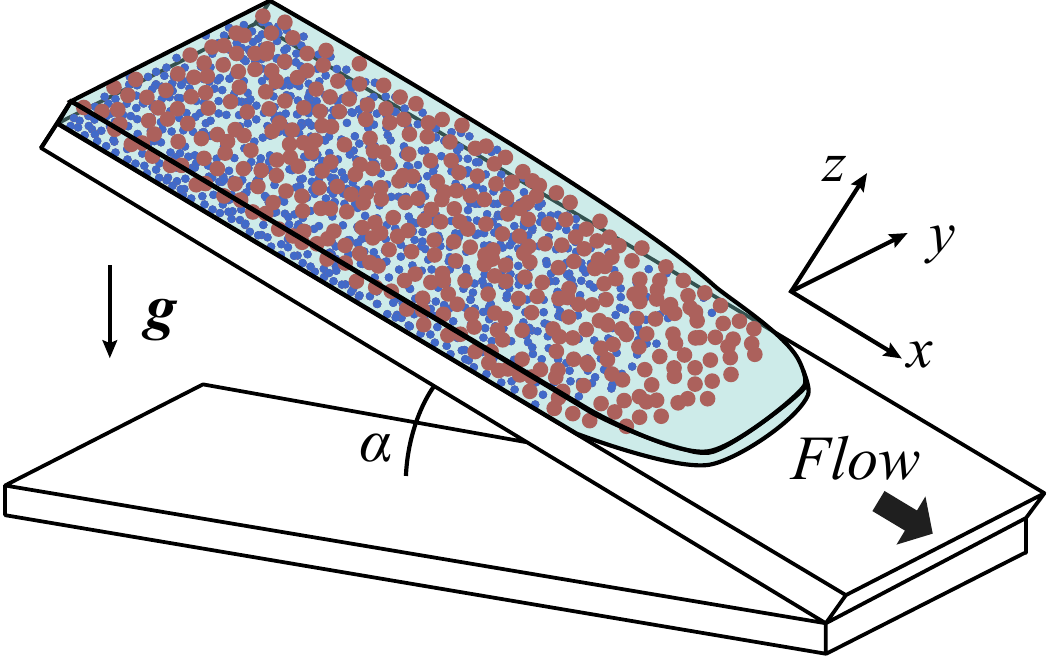}
 \caption{Schematic of bidisperse particle laden flow developing on an incline}
 \label{fig:diagram}
\end{figure}
The slurry is comprised of a fluid with viscosity $\mu_\ell$ and density $\rho_\ell$, and two different species of negatively buoyant particles with the same density $\rho_s$ and different diameters $d_1 > d_2$. This slurry is confined to a track of fixed width and slides down an incline of angle $\alpha$. The viscosity of the slurry is a function of the particle concentration, $\phi$, and is given via the Krieger-Doughertry relation as $\mu(\phi) = \mu_\ell(1 - \phi / \phi_m)^{-2}$ \cite{krieger_mechanism_1959} where $\phi_m$ is the maximum packing fraction of spheres in the fluid. Unlike the case with particles of the same size and different densities, $\phi_m$ depends on the particle concentration of particles of different sizes. This is due to the fact smaller particles can fit into the gaps between larger particles that larger particles cannot. We denote $\phi_1$ as the particle concentration of larger particles and $\phi_2$ as the particle concentration of smaller particles within the slurry, with the relation $\phi = \phi_1 + \phi_2$. The dependence of $\phi_m$ on $\chi = \frac{\phi_1}{\phi}$, the proportion of the larger particles to all of the particles in the slurry, is given as 
\begin{equation}\label{eq:phimax_sec:model}
 \phi_m = \phi_{m,0}\left(1 + \frac{3}{2} b^{\frac{3}{2}} \chi^{\frac{3}{2}} \left(1 - \chi\right)\right)
\end{equation}
where $b = \frac{d_1 - d_2}{d_1 + d_2}$ and $\phi_{m,0}$ is the maximum packing fraction of spherical particles of the same size in the fluid \cite{shauly_shear-induced_1998}. We choose $\phi_{m,0} = 0.61$ \cite{ward_experimental_2009}.
\par We now state the conservation of momentum and conservation of mass equation for the slurry respectively:
\begin{equation}\label{eq:momentum_conservation_sec:model}
 \rho\left(\partial_t \Vec{u} + \Vec{u} \cdot \nabla \Vec{u}) = \nabla \cdot \left(-p I + \mu(\nabla \Vec{u} + \nabla \Vec{u}^T\right)\right) + \rho \Vec{g}
\end{equation}
and
\begin{equation}\label{eq:mass_conservation_sec:model}
 \partial_t \rho + \nabla \cdot (\rho \Vec{u}) = 0
\end{equation}
where $\Vec{u} = (u,w)$ represents the $x$ and $z$ components of velocity respectively and $\Vec{g} = (0,-g)$ where $g$ is gravitational acceleration. Equation \eqref{eq:momentum_conservation_sec:model} describes how momentum of the fluid is conserved and equation \eqref{eq:mass_conservation_sec:model} describes how mass of the fluid is conserved. However, we assume the fluid has low Reynolds number, so the momentum terms---the left hand side of Eq.~\eqref{eq:momentum_conservation_sec:model}---can be neglected. The velocity satisfies the no-slip boundary condition ($u = 0$ when $z = 0$) and no stress ($\Vec{n} \cdot \left(-p I + \mu(\nabla \Vec{u} + \nabla \Vec{u}^T\right) = 0 $) occurs at the free surface. We also account for the conservation of each species of particles with
\begin{equation}\label{eq:particle_conservation_sec:model}
 \phi_{i,t} + \Vec{u} \cdot \nabla \phi_{i} + \nabla \cdot \Vec{J} = 0 \quad i = 1,2
\end{equation}
where $\Vec{J}$ represents all of the particle flux terms.
\par We now nondimensionalize the system using a thin film approximation as done in \cite{lee_equilibrium_2015} under the following scaling:
\begin{eqnarray}\label{eq:dimensionless}
 (\hat{x},\hat{z}) = \frac{1}{H}(\delta x, z), \quad
 \hat{u} = \frac{1}{U_0} \left(u, \frac{w}{\delta}\right), \quad
 \hat{J} = \frac{H^2}{\Bar{d}^2U_0}\left(\frac{J_x}{\delta}, J_z\right), \nonumber \\
 \hat{p} = \frac{H}{\mu_\ell U_0}, \quad
 \hat{\mu} = \frac{\mu}{\mu_\ell}, \quad 
 \hat{\rho_s} = \frac{\rho_s - \rho_\ell}{\rho_\ell}, \quad
 s = \frac{z}{H}, \nonumber 
\end{eqnarray}
where $H$ and $L$ are the characteristic film thickness and the axial length scale respectively, $U_0 = H^2 \rho_\ell g \sin{\alpha} / \mu_\ell$, $\delta = \frac{H}{L} << 1$, and $\Bar{d}$ is the average diameter of the particles. Going forward the hats will be dropped for brevity.

\par Next, we make the assumption, as done in previous work \cite{murisic_dynamics_2013, murisic_particle-laden_2011, wong_conservation_2016, ding_equilibrium_2025}, the particles equilibriate in the normal direction faster than fluid flows down the incline. We can conclude this by studying the leading order behavior of Eqs. \eqref{eq:momentum_conservation_sec:model} and \eqref{eq:mass_conservation_sec:model} alongside the assumption that $\delta << (\bar{d}/H)^2 << 1$. The upper bound ensures the particles are not too large and is consistent with the continuum model. The lower bound ensures the particles are not too small and their motion is not dominated by Brownian motion \cite{ding_equilibrium_2025}.

\par First, we study the $x$ component of the momentum equation \eqref{eq:momentum_conservation_sec:model}, which reduces to 
\begin{equation}\label{eq:velocity_ODE}
 \frac{\partial}{\partial z}\left(\mu  \frac{\partial u}{\partial z}\right) = -\rho
\end{equation} 
in leading order of $\delta$. The no-slip boundary condition on the surface of the incline gives us $\partial_z u|_{z = 0} = 0$ and no stress at the free surface of the slurry gives us $\mu  \partial_z u |_{z = h} = 0$. This leads to the following ODE in terms of the stress $\sigma$ in the $z$ direction:
\begin{equation}\label{eq:sigma'}
\frac{\partial \sigma}{\partial z} = -1 - \rho_s \phi, \quad
\sigma(0) = 1 + \rho_s \int_0^z \phi(s) ds, \quad
\sigma(h) = 0.
\end{equation}
Similar analysis for the conservation of particles yields $\partial_z J_z = 0$ for both species. As particles cannot exit the slurry through the track or the free surface, this implies
\[
J_{z,i} = 0,
\]
meaning each species of particles is in equilibrium in the $z$ direction.

\par To describe how the system evolves in the $x$ direction, we need to know the velocity in the $x$ direction, $u$. From Eqs.~\eqref{eq:velocity_ODE} and \eqref{eq:sigma'}, we obtain the ODE
\begin{equation}\label{eq:velocity}
 \sigma = \mu(\phi) \frac{\partial u}{\partial z}.
\end{equation}
In order to compute $u$, we need to first compute the viscosity and the shear stress, which requires knowledge of the particle concentration of both species in the $z$ direction.

\par We now examine the flux balance of particles in the $z$ direction by describing each of the individual components of the particle flux. We claim the total flux for each particle species $i$ is comprised of three components: shear-induced migration, settling also known as sedimentation, and self-mixing between particle species yielding 
\begin{equation}\label{eq:total_flux}
 J_i = J_{\text{shear,i}} + J_{\text{settling,i}} + J_{\text{tracer,i}}.
\end{equation}
We have dropped the $z$ subscript from the flux terms $J$ for brevity. We now define each flux term in detail. 
\par Shear-induced migration represents the migration of particles from high to low concentrations and from high to low shear stress \cite{leighton1987shear, phillips1992constitutive}. The shear-induced migration term largely remains the same from its description in \cite{murisic_particle-laden_2011} and \cite{wong_conservation_2016}, but the different diameters of the two particle species leads to different displacements when collision occurs. The shear-induced migration term is thus

\begin{equation} \label{eq:J_shear}
 J_{\text{shear,i}} = -\phi_i \sum_{j=1}^2 A_{ij} \left( K_c \nabla(\phi_j \dot{\gamma}) + K_v \frac{\phi_j \dot{\gamma}}{\mu(\phi)} \nabla(\mu) \right)
\end{equation}
where $K_c \approx 0.41$ and $K_v \approx 0.62$ are empirically determined constants \cite{phillips1992constitutive}, the coupling matrix $A$ is
\begin{equation}\label{eq:A_matrix}
 A_{ij} = \frac{1}{4} \frac{(d_i + d_j)^2}{2^{d+1}} \frac{(1 + \frac{d_i}{d_j})^d}{1 + (\frac{d_i}{d_j})^d},
\end{equation}
$\dot{\gamma} = \sigma / \mu$is the shear rate for a Newtonian fluid, and $d$ is the dimension of the spheres (in this case, $d = 3$). We note that the diagonal terms of this matrix reduces to $d_1^2/4$ and $d_2^2/4$. Also, $A$ is symmetric as the off diagonal terms are identical:
\[
A_{ij} = \frac{1}{4} \frac{(d_i + d_j)^2}{2^{d+1}} \frac{(1 + \frac{d_i}{d_j})^d}{1 + (\frac{d_i}{d_j})^d} \cdot \frac{(\frac{d_j}{d_i})^d}{(\frac{d_j}{d_i})^d} = \frac{1}{4} \frac{(d_i + d_j)^2}{2^{d+1}} \frac{(\frac{d_j}{d_i} + 1)^d}{(\frac{d_j}{d_i})^d + 1} = A_{ji}.
\]
Equations \eqref{eq:J_shear} and \eqref{eq:A_matrix} are adapted from \cite{kanehl_hydrodynamic_2015}. In shear-induced migration, the collisions from the particles as they migrate cause different displacements depending on the pair of particle species. Thus, the shear-induced migration of species $i$ has a different contribution from the two particle species, and the total flux is the sum of the contributions from each particle species. 

\par Now we move on to the settling flux. Since the particles are of the same density, we do not need to account for special particle interactions between species unlike in \cite{tripathi_viscous_1999}. We also assume particles settle the same regardless of their size. This leads us to the settling flux as originally described in \cite{leighton_measurement_1987} with hindrance function $f(\phi) = \mu_\ell (1-\phi)/\mu(\phi)$:
\begin{equation}\label{eq:J_settling}
 J_{\text{settling},i} = -\frac{2 d_i^2 \phi_i \cot{\alpha}(1 - \phi) (\rho_i - \rho_\ell) }{9 \mu (\phi)}.
\end{equation}

\par Finally, we have the tracer flux which captures the random motion of particles in a sheared mixture \cite{leighton_measurement_1987}. While the net concentration of particles $\phi$ does not change under tracer flux, the individual particle concentrations $\phi_i$ do. The tracer flux for a particle species is given as 
\begin{equation}\label{eq:J_tracer}
 J_{\text{tracer},i} = -\frac{\dot{\gamma}d_i^2}{4} D_{\text{tr}}(\phi) \phi \nabla\left(\frac{\phi_i}{\phi}\right)
\end{equation}
where $ D_{\text{tr}}$ is the tracer diffusivity and is given by the empirical expression $ D_{\text{tr}}(\phi) = \frac{1}{2}\min \{\phi^2, \phi_{\text{tr} }^2 \}$ \cite{lee_equilibrium_2015}. In the dilute limit, \cite{leighton1987shear, leighton_measurement_1987} proposes the empirical expression $\phi^2/2$ for tracer diffusivity, while \cite{sierou_shear-induced_2004} notes the tracer diffusivity becomes a constant $\phi_{\text{tr}} = 0.4$ for large concentrations.

\par Now, we express the flux balance in Eq.~\eqref{eq:total_flux} for each particle species as an ODE in terms of the particle concentration $\phi$ and the species ratio $\chi$. Writing this in terms of $\phi'$ and $\chi'$ where the derivatives here are with respect to $z$, we obtain
\begin{equation}\label{eq:ODE_full}
\begin{split}
 -\begin{bmatrix} 
 \phi_1 & 0 \\
 0 & \phi_2
 \end{bmatrix}
 A
 \begin{bmatrix}
 \chi g & h_1 \\ (1-\chi)g & h_2
 \end{bmatrix}
 \begin{bmatrix}
 \phi' \\ \chi'
 \end{bmatrix} - 
 \frac{\sigma}{4} \phi D_{tr}(\phi) \begin{bmatrix}
 d_1^2/4 & 0 \\ 0 & d_2^2/4
 \end{bmatrix}
 \begin{bmatrix}
 0 & 1 \\ 0 & -1
 \end{bmatrix}
 \begin{bmatrix}
 \phi' \\ \chi'
 \end{bmatrix}
 = \\
 \frac{2\cot{\alpha}}{9}\rho_s(1-\phi)\begin{bmatrix}
 \phi_1 & 0 \\
 0 & \phi_2
 \end{bmatrix}\begin{bmatrix}
 d_1^2/4 & 0 \\ 0 & d_2^2/4
 \end{bmatrix} \begin{bmatrix}
 1 \\ 1
 \end{bmatrix} 
 + K_c \begin{bmatrix}
 \phi_1 & 0 \\
 0 & \phi_2
 \end{bmatrix}
 A\begin{bmatrix}
 \phi\chi\sigma' \\ \phi(1-\chi)\sigma'
 \end{bmatrix} \\
\end{split}
\end{equation}
where:
\begin{equation*}
    \begin{split}
        g   = K_c \left( \sigma - \phi\sigma \left( \frac{2}{\phi_m - \phi} \right) \right) 
       + K_v \phi \sigma \left( \frac{2}{\phi_m - \phi} \right), \quad 
h_1 = K_c \left( \phi \sigma + \phi \chi \sigma \left( \frac{2}{\phi_m - \phi} \right) \xi \right), \\
h_2 = K_c \left( -\phi \sigma + \phi (1 - \chi) \sigma \left( \frac{2}{\phi_m - \phi} \right) \xi \right), \quad 
\xi = \frac{\phi}{\phi_m} \left( \phi_m - \phi_{m,0} 
       \frac{3 - 5X}{2X(1 - X)} \right).
    \end{split}
\end{equation*}

\par The system in equation \eqref{eq:ODE_full} is linear in the variables $\phi'$ and $\chi'$, and recalling $\phi_1 = \phi \chi$ and $\phi_2 = \phi (1 - \chi)$, we obtain the following ODEs for $\chi'$ and $\phi'$:
\begin{multline}\label{eq:chi'}
 \chi' = \frac{2\cot{\alpha}}{9}\rho_s (1-\phi)(1-\chi)(\chi)\left[(2\chi -1)d_1^2d_2^2 + a(d_1^2 - (d_1^2+d_2^2)\chi \right] 
 \\ \times \left[ \sigma \left(K_c \det{A} (1-\chi)\phi \chi + D_{tr}(\phi) \left[d_1^2 d_2^2 (1-\chi)^2 +a(d_1^2 + d_2^2)(1-\chi)(\chi) + d_1^2d_2^2 \chi^2 \right] \right) \right]^{-1}
\end{multline}
and
\begin{align}\label{eq:phi'}
\begin{split}
\phi' &= 
- \left[ \left( \chi + \frac{a}{d_2^2}(1 - \chi) \right) h_1 
+ \left( (1 - \chi) + \frac{a}{d_1^2} \chi \right) h_2 \right] \chi' \\
&\quad + \frac{2 \cot{\alpha}}{9} \rho_s (1 - \phi) 
+ K_c \phi \sigma' \left( \chi^2 
+ a \chi (1 - \chi) \left( \frac{1}{d_1^2} + \frac{1}{d_2^2} \right) 
+ (1 - \chi)^2 \right) \\
&\quad \times \left[ 
g \left( \chi^2 
+ a \left( \frac{1}{d_1^2} + \frac{1}{d_2^2} \right) \chi (1 - \chi) 
+ (1 - \chi)^2 \right) 
\right]^{-1}, \\[1em]
\end{split}
\end{align}
where $a$ is the off diagonal entry of $A$ with $\frac{1}{4}$ factored out:
\begin{equation}\label{eq:A_offdiag}
 a = \frac{(d_1 + d_1)^2}{2^{d+1}} \frac{(d_1 + d_2)^d}{d_1^d + d_2^d}.
\end{equation}

Coupled with the ODE for stress given in Eq.~\eqref{eq:sigma'} and the appropriate boundary and integral conditions, we obtain the following system that describes how the particles equilibriate in the $z$ direction:

\begin{align}\label{eq:ODE_phi_chi_sigma}
\begin{split}
\phi' &= 
- \left[ \left( \chi + \frac{a}{d_2^2}(1 - \chi) \right) h_1 
+ \left( (1 - \chi) + \frac{a}{d_1^2} \chi \right) h_2 \right] \chi' \\
&\quad + \frac{2 \cot{\alpha}}{9} \rho_s (1 - \phi) 
+ K_c \phi \sigma' \left( \chi^2 
+ a \chi (1 - \chi) \left( \frac{1}{d_1^2} + \frac{1}{d_2^2} \right) 
+ (1 - \chi)^2 \right) \\
&\quad \times \left[ 
g \left( \chi^2 
+ a \left( \frac{1}{d_1^2} + \frac{1}{d_2^2} \right) \chi (1 - \chi) 
+ (1 - \chi)^2 \right) 
\right]^{-1}, \\[1em]
\chi' &= \frac{2\cot{\alpha}}{9}\rho_s (1-\phi)(1-\chi)(\chi)\left[(2\chi -1)d_1^2d_2^2 + a(d_1^2 - (d_1^2+d_2^2)\chi \right] 
\\ &\quad \times \left[ \sigma \left(K_c \det{A} (1-\chi)\phi \chi + D_{tr}(\phi) \left[d_1^2 d_2^2 (1-\chi)^2 +a(d_1^2 + d_2^2)(1-\chi)(\chi) + d_1^2d_2^2 \chi^2 \right] \right) \right]^{-1} \\[1em]
\sigma' &= -1 - \rho_s \phi, \\[1em]
\int_0^h \phi \, dz &= \phi_0, \quad
\frac{1}{\phi_0} \int_0^h \phi \chi \, dz = \chi_0, \quad
\sigma(0) = 1 + \rho_s \int_0^z \phi(s) \, ds, \quad
\sigma(h) = 0.
\end{split}
\end{align}

\par By solving the ODE system \eqref{eq:ODE_phi_chi_sigma}, we obtain vertical profiles for $\phi$, $\chi$, and $\sigma$, which coupled with Eq.~\eqref{eq:velocity}, determines the velocity profile in the downstream direction of the mixture. We scale the profile quantities by the film height, so that $\tilde{u} = h^2 u$. By integrating the scaled velocity in the normal direction to the plane, we define the fluxes 
\begin{equation}\label{eq:fluxes}
 f(\phi_0, \chi_0) = \int_0^1 \tilde{u} ds, \quad g_i(\phi_0, \chi_0) = \int_0^1 \tilde{u} \phi_i ds.
\end{equation}
With the change of variables $s = z/h$, we obtain the following hyperbolic system of conservation laws that govern the height profiles of the clear fluid and particle mixtures as it flows down the track:
\begin{equation}\label{eq:PDE}
 \frac{\partial h}{\partial t} + \frac{\partial}{\partial x} (h^3 f(\phi_0, \chi_0)) = 0 \quad
 \frac{\partial (h \phi_0,i)}{\partial t} + \frac{\partial}{\partial x} (h^3 g_i(\phi_0, \chi_0)) = 0 \quad i = 1,2.
\end{equation}

\section{Analysis}\label{sec:analysis}
In Sec.~\ref{subsec:equilibrium} we analyze equilibrium model presented in Eq.~\eqref{eq:system}. We observe larger particles tend to settle at the top as seen in example solutions presented in Fig.~\ref{fig:equilibrium} and we will show in Proposition \ref{prop:brazil_nuts} why we expect larger particles to settle at the top. In Sec.~\ref{subsec:conservation} we discuss the general structure of solutions to systems of hyperbolic conservation laws and the types of shocks we expect to see in the solution.
\subsection{Equilibrium Model}\label{subsec:equilibrium}
\par First, we discuss how the fluid and particles equilibriate in the $z$ direction according to system \eqref{eq:ODE_phi_chi_sigma}. As in \cite{murisic_particle-laden_2011} and \cite{lee_equilibrium_2015}, we define a solution to \eqref{eq:ODE_phi_chi_sigma} as \textit{ridged} if particles of either species accumulate at the top of the profile (if $\phi \to \phi_m$ as $s \to 1$). On the other hand, we define a solution as \textit{settled} if the particle concentration decreases monotonically as $s$ increases. This leads to a layer of clear fluid on top of the profile ($\phi_i = 0$ for all $s \geq s^*$ for some $s^* < 1$).

\begin{figure}[h]
 \centering
 \subfloat[Ridged Regime at $\chi_0 = 0.3$]{\label{subfig:eq_ridged}\includegraphics[width = 0.5\linewidth]{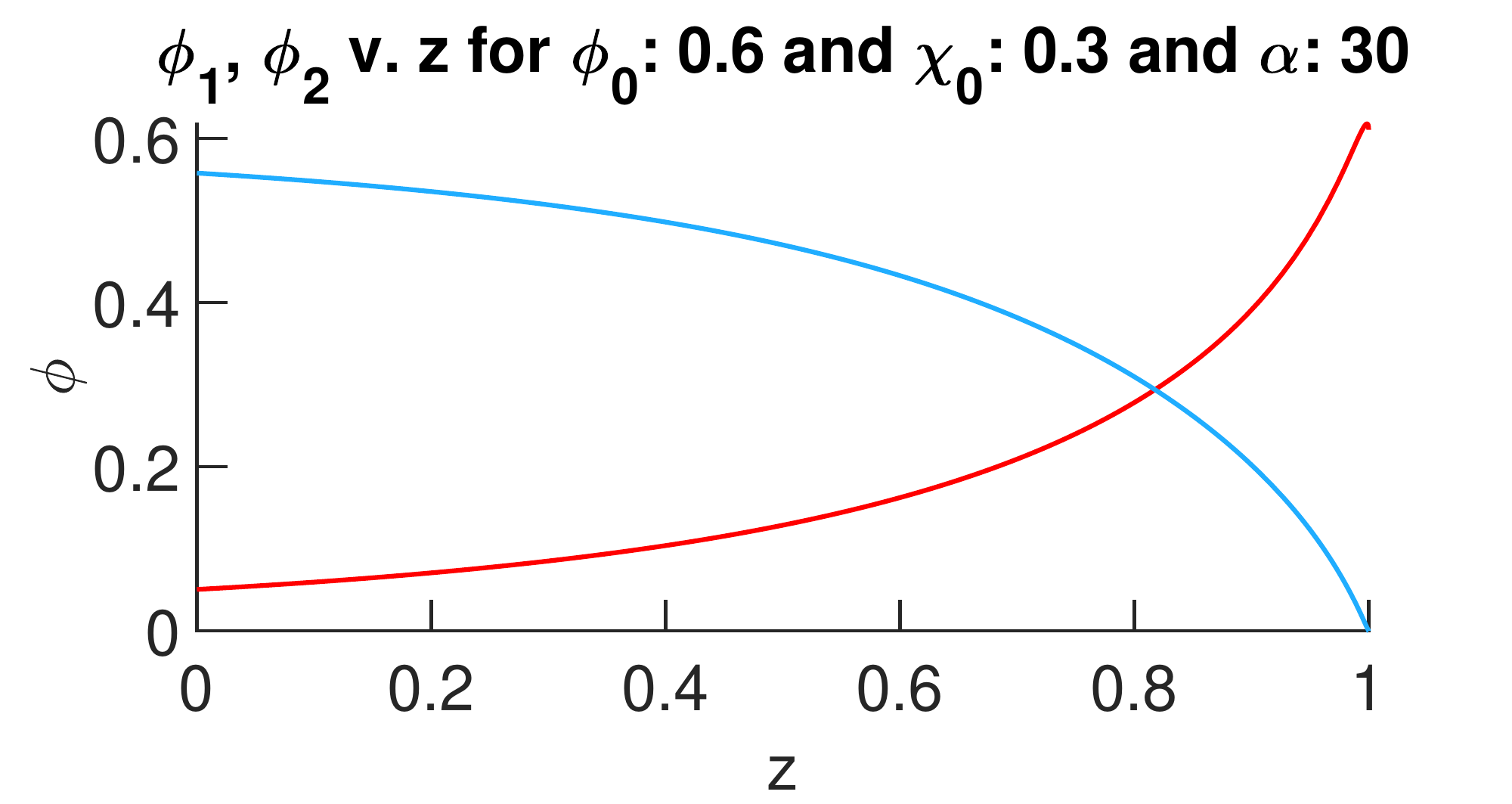}} 
 \subfloat[Ridged Regime at $\chi_0 = 0.7$]{\label{subfig:eq_ridged_small}\includegraphics[width = 0.5\linewidth]{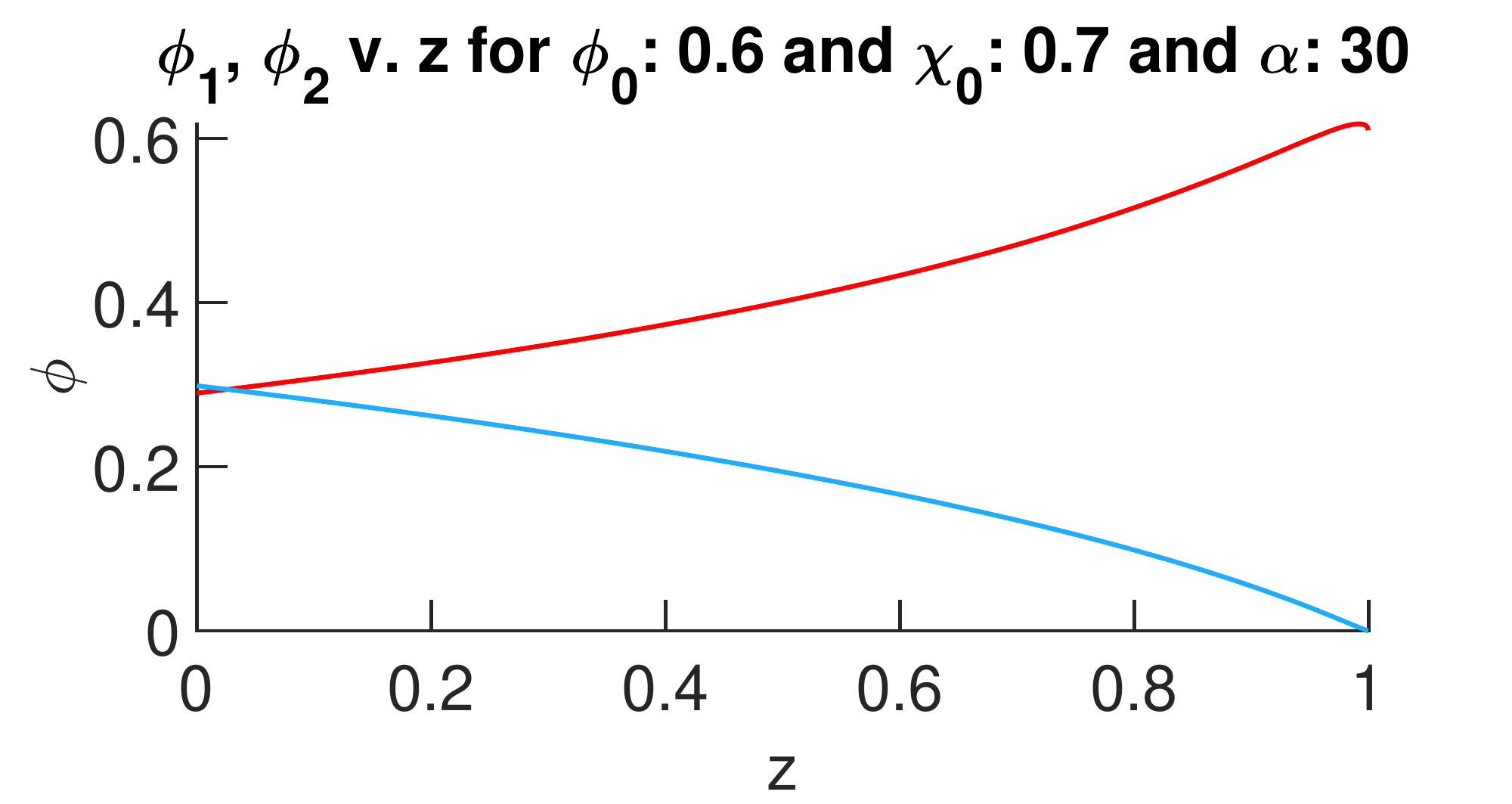}} \\
 \subfloat[Settled Regime at $\chi_0 = 0.3$]{\label{subfig:eq_settled}\includegraphics[width = 0.5\linewidth]{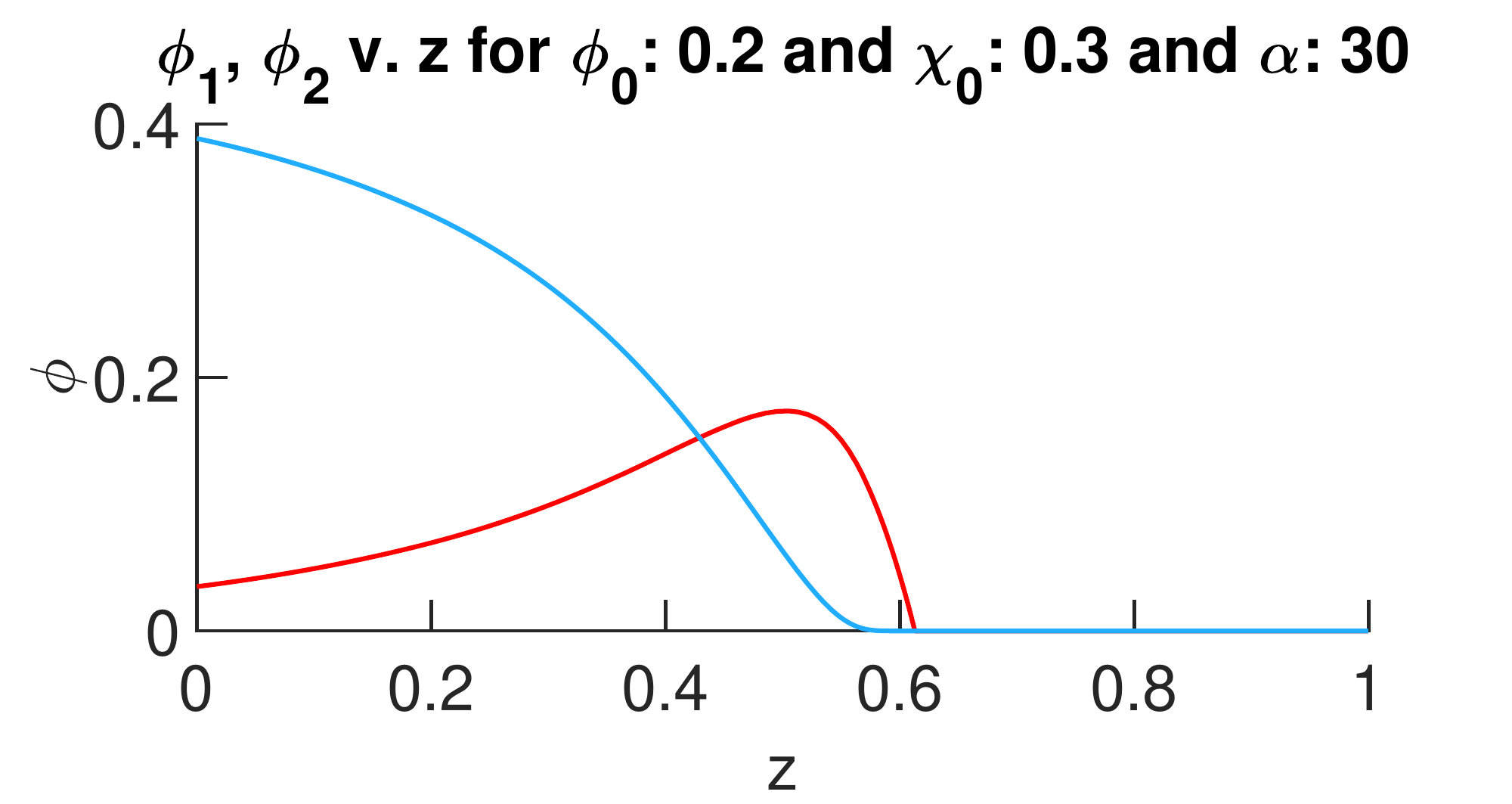}}
 \subfloat[Settled Regime at $\chi_0 = 0.7$]{\label{subfig:eq_settled_small}\includegraphics[width = 0.5\linewidth]{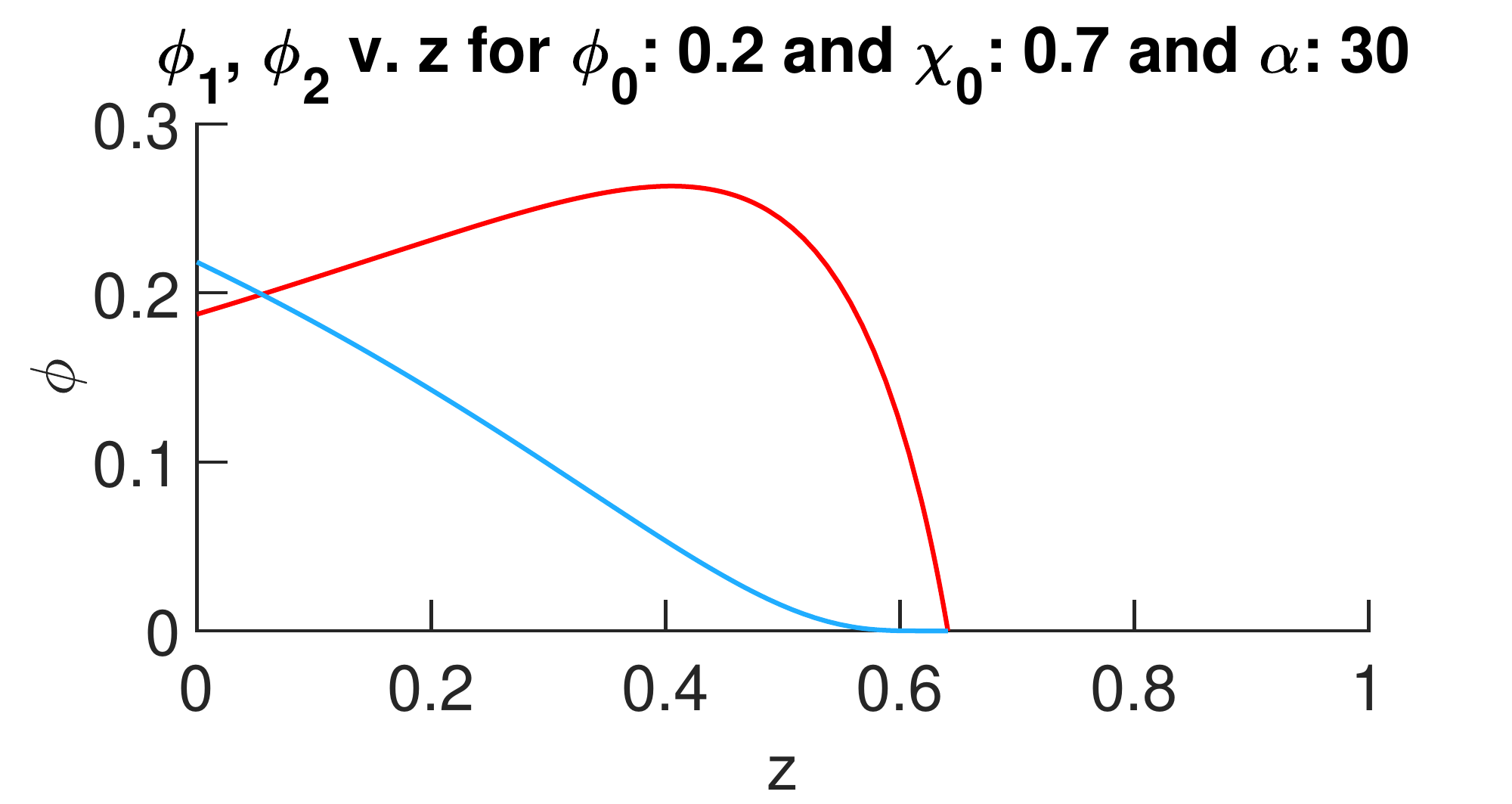}}
 \caption{Example equilibrium profiles that solve system \eqref{eq:ODE_phi_chi_sigma} for the given parameters. Figs.~\ref{subfig:eq_ridged} and \ref{subfig:eq_ridged_small} show the case of a concentrated suspension and Figs.~\ref{subfig:eq_settled} and \ref{subfig:eq_settled_small} show a dilute suspension. The concentration of larger particles is red, and the concentration of smaller particles is blue.}
 \label{fig:equilibrium}
\end{figure}

\par Figure \ref{fig:equilibrium} shows example equilibrium profiles that solve system \eqref{eq:ODE_phi_chi_sigma}. Figure \ref{subfig:eq_ridged} is an example of a ridged profile while Fig.~\ref{subfig:eq_settled} is an example of a settled profile. In Fig.~\ref{subfig:eq_ridged}, as $z$ increases from the substrate to the free surface, the volume fraction of larger particles, $\phi_1$, increases monotonically to the maximum packing fraction, $\phi_m$. On the other hand, the volume fraction of smaller particles, $\phi_2$, decreases monotonically to $0$. Physically, this corresponds to particles on top of the fluid profile, corresponding to $\phi = \phi_1 + \phi_2 \neq 0$ when $z = 1$. Additionally, only larger particles are present at the top of the fluid profile as $\phi_2 = 0$ when $z = 1$. On the other hand, in Fig.~\ref{subfig:eq_settled}, both concentrations of particles, $\phi_1$ and $\phi_2$, vanish as $z$ increases from the substrate to the free surface, eventually reaching $0$ before $z = 1$. Physically, we see a mix of larger and smaller particles in the fluid profile, but the top of the fluid profile contains only clear fluid.

\par As in the single species and bidensity cases, different choices of integral conditions ($\phi_0$ and $\chi_0$) lead to different behaviors for the equilibrium solution. In the bidensity case, the authors of \cite{wong_conservation_2016} are able to show the ODE for $\chi'$ takes the essential form of $\chi(1-\chi)$, and subsequently $\chi$ is an monotonically increasing function. While this result is intuitive in the bidensity case as it implies lighter particles will rise to the top, we show a similar result for the bidisperse case, instead showing larger particles rise to the top.

\begin{proposition}[Brazil-nut Proposition]\label{prop:brazil_nuts}
 Given an integral condition $\phi_0 \in (0, \phi_m)$, $\chi_0 \in (0,1)$ and assuming $0 \leq \phi \leq 1$ and $\det A \geq 0$, then the solution $\chi(s)$ to system \eqref{eq:ODE_phi_chi_sigma} is monotonically increasing.
\end{proposition}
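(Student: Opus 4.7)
The plan is to read off the sign of $\chi'$ directly from its formula in \eqref{eq:ODE_phi_chi_sigma} by peeling off the obvious nonnegative prefactors and reducing what remains to a pair of bounds on the off-diagonal coupling $a$.

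I would first handle the denominator. From \eqref{eq:sigma'}, $\sigma' = -1 - \rho_s \phi < 0$ and $\sigma(h) = 0$, so $\sigma > 0$ on $[0,h)$. The bracket multiplying $\sigma$ is $K_c \det A \cdot (1-\chi)\phi\chi$ (nonnegative by the hypothesis $\det A \geq 0$ together with $\phi, \chi \in [0,1]$) plus a $D_{tr}(\phi) \geq 0$ term that is a sum of manifestly nonnegative monomials in $\chi$, $1-\chi$, $a$, and $d_i^2$. Hence the denominator is strictly positive. Turning to the numerator, the prefactor $\tfrac{2\cot\alpha}{9}\rho_s(1-\phi)(1-\chi)\chi$ is nonnegative since $\alpha\in(0,\pi/2)$, $\rho_s > 0$ (negatively buoyant), $0\le\phi\le 1$, and $\chi\in(0,1)$. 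What remains is the polynomial
\[
P(\chi) := (2\chi-1)\,d_1^2 d_2^2 + a\bigl(d_1^2 - (d_1^2+d_2^2)\chi\bigr).
\]
The key algebraic step is to regroup $P$ as
\[
P(\chi) = (1-\chi)\,d_1^2(a - d_2^2) + \chi\,d_2^2(d_1^2 - a),
\]
so that $P(\chi) \geq 0$ on $[0,1]$ is equivalent to the sandwich $d_2^2 \leq a \leq d_1^2$.

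The upper bound is essentially free: $\det A = \tfrac{1}{16}(d_1^2 d_2^2 - a^2) \geq 0$ gives $a \leq d_1 d_2 \leq d_1^2$ since $d_1 \geq d_2$. The lower bound $a \geq d_2^2$ is the main obstacle, since it is \emph{not} implied by $\det A \geq 0$ and must be extracted from the explicit form \eqref{eq:A_offdiag}. Writing $r = d_1/d_2 \geq 1$ and using $d=3$, the claim reduces to $g(r) := (1+r)^5 - 16(1+r^3) \geq 0$ for $r \geq 1$. I would verify $g(1)=0$ and show $g$ is nondecreasing on $[1,\infty)$ via AM--GM: $(1+r)^2 \geq 4r$, so
\[
g'(r) = 5(1+r)^4 - 48 r^2 \geq 5(4r)^2 - 48 r^2 = 32 r^2 \geq 0,
\]
whence $g(r) \geq g(1) = 0$ and $a \geq d_2^2$ follows.

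Putting the pieces together, the numerator of $\chi'$ is nonnegative and the denominator is strictly positive, so $\chi'(s) \geq 0$ throughout the layer, and $\chi$ is monotonically increasing. The nontrivial ingredient is the sandwich bound $d_2^2 \leq a \leq d_1^2$; the upper half is a short consequence of the hypothesis $\det A \geq 0$, while the lower half genuinely uses the explicit form of the coupling coefficient and would be the place where a more general proof (e.g., allowing arbitrary symmetric positive-semidefinite $A$) could fail.
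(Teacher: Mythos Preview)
Your proof is correct and follows the same overall strategy as the paper: show the denominator of $\chi'$ is positive, then reduce the sign of the numerator to the affine function $P(\chi)$ and control it on $[0,1]$ via the sandwich $d_2^2 \le a \le d_1^2$ (the paper's Lemma~\ref{lem:A_ij}). Your convex-combination regrouping of $P(\chi)$ is equivalent to the paper's endpoint evaluation at $\chi=0,1$.

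The execution of the sandwich bound differs in two small but pleasant ways. For the upper bound, the paper argues directly from the explicit formula for $a$, while you observe that the hypothesis $\det A = \tfrac{1}{16}(d_1^2 d_2^2 - a^2) \ge 0$ already forces $a \le d_1 d_2 \le d_1^2$; this makes transparent which half of the sandwich is a consequence of the assumption and which genuinely requires the form of the coupling. For the lower bound, both arguments reduce to the same polynomial inequality (your $g(r)\ge 0$ for $r = d_1/d_2 \ge 1$ becomes the paper's inequality in $x = d_2/d_1 \in (0,1)$ under $r = 1/x$), but the paper factors the quintic as $-(x-1)(x+1)(15x^3 - 5x^2 + 5x + 1)$ and checks the cubic factor via its discriminant, whereas your AM--GM step $(1+r)^4 \ge 16 r^2$ gives $g'\ge 32 r^2 \ge 0$ in one line. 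Neither approach generalizes beyond $d=3$ without rework, but yours is the cleaner of the two.
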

Note the assumption that $0 \leq \phi \leq 1$ is natural as $\phi$ represents the particle concentration. However, the assumption $\det A \geq 0$ requires some explanation. Unfortunately, it is not always true that $\det A \geq 0$. When $d_2/d_1 \lessapprox 0.1$ the determinant is actually negative. In experiments, we choose particle sizes well away from this limit. To prove Proposition \ref{prop:brazil_nuts}, we will use the following lemma that orders the entries of the coupling matrix $A$.
\begin{lemma}\label{lem:A_ij}
 In the coupling matrix $A$ defined in \eqref{eq:A_matrix},
 \[
 \frac{d_2^2}{4} < A_{12} < \frac{d_1^2}{4}
 \]
 where $A_{12}$ is the off diagonal entry.
\end{lemma}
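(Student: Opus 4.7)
The plan is to convert the double inequality into two polynomial inequalities in the single ratio $r = d_1/d_2 > 1$ and verify each by factoring out the obvious root at $r = 1$.

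First, I would substitute $d = 3$ into the definition \eqref{eq:A_matrix} and use the identity $d_1^3 + d_2^3 = (d_1+d_2)(d_1^2 - d_1 d_2 + d_2^2)$ to simplify the fractional factor. This yields the closed form
\begin{equation*}
A_{12} \;=\; \frac{(d_1+d_2)^4}{64\,(d_1^2 - d_1 d_2 + d_2^2)}.
\end{equation*}
After setting $r = d_1/d_2$ and dividing through by $d_2^4$, the claim $d_2^2/4 < A_{12} < d_1^2/4$ is equivalent to the chain
\begin{equation*}
16\,(r^2 - r + 1) \;<\; (r+1)^4 \;<\; 16\,r^2\,(r^2 - r + 1).
\end{equation*}

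Next, I would move each of the two inequalities to the form $P(r) > 0$ for a quartic $P$. A direct expansion shows that both quartics vanish at $r = 1$, so one can extract the factor $(r-1)$ by polynomial division and reduce each inequality to the positivity of a cubic on $[1,\infty)$. For the upper bound on $A_{12}$ this cubic is $15r^3 - 5r^2 + 5r + 1$, whose derivative $45r^2 - 10r + 5$ has negative discriminant and is therefore positive for all real $r$; combined with the value $16$ at $r = 1$, the cubic is strictly positive on $[1,\infty)$. An analogous check handles the cubic $r^3 + 5r^2 - 5r + 15$ arising from the lower bound. Since $d_1 > d_2$ strictly, one has $r - 1 > 0$, and multiplying delivers both strict inequalities.

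The main obstacle is purely algebraic bookkeeping: carrying the simplification of $A_{12}$ through cleanly and performing the two polynomial divisions by $(r-1)$ without error. There is no conceptual difficulty, as the monotonicity of each residual cubic on $[1,\infty)$ can be read off by elementary sign analysis of its derivative.
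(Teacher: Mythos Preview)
Your proposal is correct and follows essentially the same route as the paper: reduce the double inequality to a single ratio variable, factor out the obvious root at the equality case, and verify positivity of the residual cubic via elementary analysis of its derivative (the paper even lands on the same cubic $15t^3 - 5t^2 + 5t + 1$). The only cosmetic differences are that you work with $r = d_1/d_2 > 1$ rather than the paper's $x = d_2/d_1 \in (0,1)$, and you pre-simplify $A_{12}$ via the sum-of-cubes identity, which drops one polynomial degree; one small caveat is that the derivative of your second cubic $r^3 + 5r^2 - 5r + 15$ does have real roots, so the ``analogous'' discriminant argument does not literally apply, but positivity on $[1,\infty)$ is immediate from $r^3 + 5r(r-1) + 15 > 0$.
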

The proof of lemma \ref{lem:A_ij} can be found in the appendix. We now proceed with the proof of proposition \ref{prop:brazil_nuts}.
\begin{proof}
 Our goal is to determine the equilibrium points of $\chi'$ in terms of $\chi$ and the sign on the interval between said equilibrium points. We note the denominator of $\chi'$ is positive (as we assume $\det A$ is positive). This leaves us with examining the numerator of $\chi'$, where we immediately note equilibria at $\chi = 0$ and $\chi = 1$. Since we assume $\phi < 1$, we only need to check the expression $(2\chi -1)d_1^2d_2^2 + a(d_1^2 - (d_1^2+d_2^2)\chi$, which is a linear expression in $\chi$ and can be rewritten as $(2d_1^2d_2^2 - a(d_1^2 + d_2^2)) \chi + (ad_1^2 - d_1^2 d_2^2)$. Recall that $a$ is the off diagonal entry of the coupling matrix $A$ divided by $\frac{1}{4}$ as stated in equation \eqref{eq:A_offdiag}. We compute this linear term at values $\chi = 0$ and $\chi = 1$. At $\chi = 0$, the expression evaluates to $ad_1^2 - d_1^2d_2^2 = d_1^2 (a - d_2^2)$. As established in Lemma \ref{lem:A_ij}, $d_2^2 < a$, so the expression is positive at $\chi = 0$. Similarly, at $\chi = 1$, the expression evaluates to $d_1^2 d_2^2 - a d_2^2 = d_2^2 (d_1^2 - a)$. But, since $d_1^2 > a$, the expression is also positive at $\chi = 1$. Since the linear expression is positive at $\chi = 0$ and $\chi = 1$, it must be positive for all $\chi \in (0,1)$. Thus, $\chi'$ is positive for $\chi \in (0,1)$, and $\chi$ as a solution to system \eqref{eq:system} is monotone increasing. 
\end{proof}
\par Proposition \ref{prop:brazil_nuts} establishes that regardless of whether the solution is ridged or settled, the ratio of larger particles to the total number of particles increases as the height of the fluid profile increases. We can heuristically explain this observation by arguing sedimentation flux, which would otherwise promote stokes settling, is counterbalanced by the shear and tracer fluxes when on an incline. As the inclination angle decreases, the shear rate decreases and so do the shear and tracer fluxes. The settling flux does not scale with shear rate and thereby dominates the other two. In practice, the smaller particles vanish at the top of the equilibrium profiles which leaves only the larger particles at the top of the particle profile as in Fig.~\ref{fig:equilibrium}.

\par Our model is consistent with the observations in granular convection and inclined granular flows that larger particles rise to the top, and provides a mathematical explanation in the equilibrium theory for why this should be the case. We note that our model does not include mechanisms that try to capture the ability of smaller particles being able to fit into the void space larger particles are not, unlike in other works \cite{rosato_why_1987, savage_particle_1988, thornton_three-phase_2006}. In particular, our sedimentation term in Eq.~\eqref{eq:J_settling} for a particle species $i$ does not explicitly depend on the concentration of the other particle species; we only include a hindered settling term that depends on the presence of both species of particles. While the maximum packing fraction $\phi_m$ is a function of $\chi$, it does not impact differently sized particles differently. Our model also does not include wall effects which \cite{knight_vibration-induced_1993} and \cite{knight_experimental_1996} propose as one of the potential explanations behind the Brazil-nut effect.

\par While we do not claim to fully explain the Brazil-nut effect, we find it encouraging the Brazil-nut effect naturally arises in our model without including terms that try to account for it. However, our model assumes the particles are suspended in a viscous fluid and many simplifying assumptions were made with this assumption. It is unclear how dependent our Brazil-nut conclusion is on such assumptions. Furthermore, the Brazil-nut effect is obviously observed in more general settings as well. Still, it is promising the Brazil-nut effect arises in our model without specific consideration.

\subsection{Conservation Law Model}\label{subsec:conservation}
\par With the ability to compute the equilibrium profiles that solve system \eqref{eq:ODE_phi_chi_sigma}, we are also able to compute the fluxes in Eq.~\eqref{eq:fluxes} and thus advance the PDE system described in Eq.~\eqref{eq:PDE}. We establish the structure of solutions to a three by three system of hyperbolic conservation laws as presented in \cite{lax_1_1973}. We write the system in vector form so let $U = (h, h\phi_{0,1}, h\phi_{0,2})^T$ and $F(U) = h^3(f,g_1,g_2)^T$ resulting in the system
\begin{equation}\label{eq:system}
 \frac{\partial U}{\partial t} + \frac{\partial}{\partial x}(F(U)) = 0.
\end{equation}
To understand the solution structure, we study the solution of Eq.~\eqref{eq:system} under Riemann initial conditions:
\begin{equation}\label{eq:Riemann}
 U(x,0) = \begin{cases}
 U^L & x < 0 \\
 U^R & x > 0
 \end{cases}
\end{equation}
with the superscripts denoting the left or right end states respectively. We call the system \eqref{eq:system} with initial conditions \eqref{eq:Riemann} the Riemann problem. Although our experiments and thus our numerical simulations have a finite volume initial condition, understanding of the solution structure for Riemann initial condition can be bridged to understanding for finite volume initial condition as done in the monodisperse case in \cite{wang_rarefaction-singular_2015}. Denoting $J$ as the Jacobian of the system, we order the eigenvalues $0 < \lambda_1 < \lambda_2 < \lambda_3$ assuming the eigenvalues are real and positive. The eigenvalues depend on the state $U$.

\par The weak solution to the Riemann problem transports the discontinuity from $U^L$ to $U^R$ with a speed $s$ prescribed by the Rankine-Hugoniot jump condition:
\begin{equation}\label{eq:RH}
F\left(U^L\right) - F\left(U^R\right) = s\left(U^L - U^R\right)
\end{equation}
We impose the Lax entropy condition to define what are considered permissible shocks:
\begin{equation}\label{eq:Lax}
\lambda_k\left(U^L\right) > s > \lambda_k\left(U^R\right), \quad \lambda_{k+1}\left(U^R\right) > s > \lambda_{k-1}\left(U^L\right)
\end{equation}
for some $k$. So, for a given state $U$, the Hugoniot locus is the smooth curve that passes through $U$ and consists of all other states that solve equation \eqref{eq:RH} under the Lax entropy condition \eqref{eq:Lax}. The state $U$ can be taken as either the left or right state.

\par For a given left state $U^L$ and right state $U^R$ we don't expect a single $s$ to satisfy the Rankine-Hugonoit jump condition in \eqref{eq:RH} as that would require each component of the equality to be satisfied. This corresponds geometrically to there being no direct Hugoniot curve connecting the states $U^L$ and $U^R$. However, we can find intermediary states such that there is a sequence of intermediary states such that there exists a connection between $U^L$ and $U^R$ by traveling along the Hugoniot curves of $U^L$, the intermediary states, and $U^R$. In general, we expect there to be three intermediary states which leads to the triple shock structure we see in the settled regime \cite{wong_conservation_2016}. However, in the ridged regime, we see a singular shock, a shock whose height is singular, instead. In the monodisperse \cite{wang2014shock} and bidensity case \cite{wong_conservation_2016}, this is due to the Hugoniot curves of the states approaching each other asymptotically in the far field instead of intersecting. We pose as an open problem that as the system transitions from the settled to the ridged regime, the Hugoniot curves of the states approach each other asymptotically in a way such that a single singular shock forms, as opposed to the multiple shock structure in the bidensity case. In the next sections, we present the laboratory experimental design and a few experiments where the Brazil-nut effect and evolution of the particle and liquid fronts can be seen. 

\begin{figure}[h]
 \centering
 \includegraphics[width=0.5\linewidth]{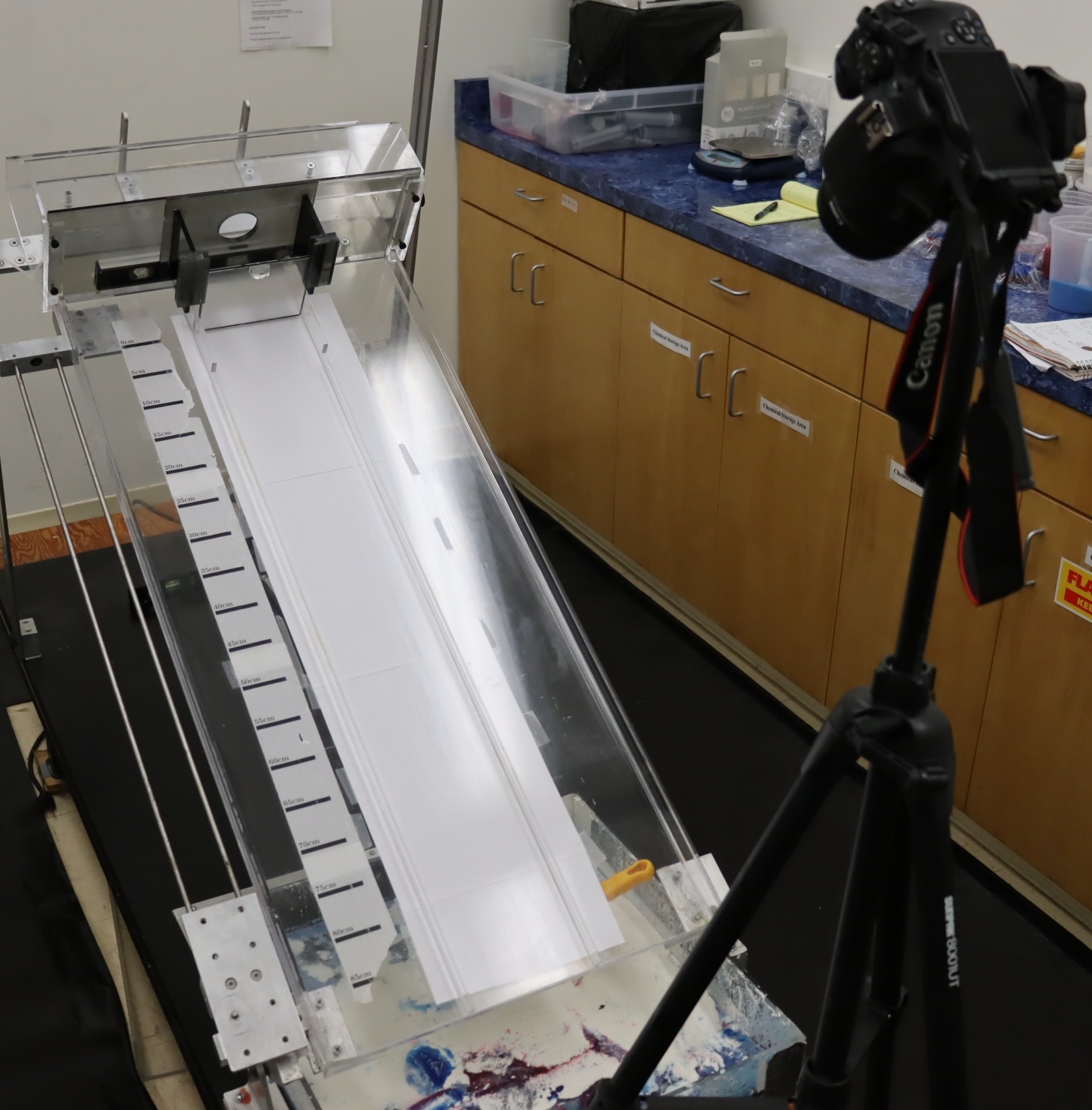}
 \caption{Setup for the constant volume inclined plane experiments. }
 \label{fig:setup}
\end{figure}

\section{Experiments}\label{sec:experiment}

The goal of the experiments is to capture the separation of different-sized particles suspended in a liquid thin-film moving down an incline. Fig.~\ref{fig:setup} shows the inclined track and table setup for the constant volume inclined plane experiments. The experiments are carried out on a flat one-meter-long acrylic track that can be adjusted at angles of $10-65^\circ$. This is the same track used in previous work \cite{zhou2005theory, murisic_particle-laden_2011, wong_conservation_2016}. Before running the experiment, the track is wiped with paper towels, and a gate is placed at the front of the reservoir seen at the top of the track. A well-mixed slurry is poured behind the gate and then quickly released to start the experiment. An aerial view of the experiment is filmed to capture the distance of each front (liquid, large particles, small particles) as seen in the sample frame in Fig.~\ref{subfig:settled}. 

The particles are hard, spherical beads made with soda lime glass (Ceroglass). They are $65-95 \%$ round with a density of $2.45-2.50 \mathrm{~g} / \mathrm{cm}^3$. The beads used were `deco beads' which are colored and processed by the manufacturer. The two species sizes we choose were light blue with a diameter of $0.2-0.4 \mathrm{~mm}$ and red with a diameter of $0.5-0.75 \mathrm{~mm}$. The particles are mixed with PDMS silicone oil (Clearco Products) with a medium viscosity of $1,000 \mathrm{cSt}$ and a density of $0.971 \mathrm{~g} / \mathrm{cm}^3$. In the next section, we show a few experiments and compare them with the model.

\section{Model case studies compared with laboratory experiments}\label{sec:discussion}

\par We make qualitative comparisons between numerical simulations and physical experimental data. Figures \ref{subfig:settled_25_exp}, \ref{subfig:settled_50_exp}, and \ref{subfig:settled_75_exp} are experiments and Figs.~\ref{subfig:settled_25_num}, \ref{subfig:settled_50_num}, and \ref{subfig:settled_75_num} are simulations done in the settled regime with parameters $\alpha = 20^{\circ}$ and $\phi_0 = 0.4$. The parameter $\chi_0$ is varied across the different runs. Figure \ref{subfig:ridged_exp} shows an example of the ridged regime with parameters $\alpha = 55^{\circ}$, $\phi_0 = 0.4$, and $\chi_0 = 0.5$.
For the numerical simulations the large particle diameter used is $0.625 \mathrm{~mm}$ and the small particle diameter used is $0.200 \mathrm{~mm}$. 

\par For the numerical simulation, we assume a fixed volume initial condition and use an upwind scheme. The fixed volume is a rectangle of volume $\phi_0$ supported from $-8.4 \cdot 10^{-2} \mathrm{m}$ to $0 \mathrm{m}$ to model how the slurry initially sits in the reservoir. The numerical domain ranges from $-8.4 \cdot 10^{-2}$ to $1.2$ with $514$ grid points. Similar to previous numerical simulations \cite{murisic_particle-laden_2011} and \cite{wong_conservation_2016}, we assume the slurry remains well-mixed until a transition time $t^* = 150$s and which afterwards the particles are assumed to evolve according to equation \eqref{eq:PDE}.

\par Overall, we observe excellent qualitative agreement between the numerical simulations and experimental data, with our model accurately capturing the key qualitative features of the experiments. First, the larger particles in both the settled and ridged regimes are always at the front of the particle ridge. This is explained with the understanding of the Brazil-nut effect established in Proposition \ref{prop:brazil_nuts}. Since larger particles rise to the top according to the established equilibrium theory, the higher velocity field at the free surface of the fluid moves the larger particles faster downstream to the front of the slurry. When examining the side profiles of our experiments, the larger particles congregate at the top as shown in Fig.~\ref{fig:sideprofile}.

\par When looking at the experiments in Fig.~\ref{fig:tile_comparison}, recall the larger (red) particles will appear on top of the smaller (blue) particles. So in aerial profiles, it can be harder to view the separation when there's a high concentration of larger particles. Also, the simulations use a fixed diameter for each particle size. In the experiments, we have distributions of particle sizes and we use the mean to represent the diameter for comparison. Having distributions of sizes in the experiment can account for the lack of distinct separation.

\subsection{Settled regime, $\chi_0 = 0.25$}
\par We begin with comparisons between the experimental data and numerical simulations in the settled regime. In Figs.~\ref{subfig:settled_25_num} and \ref{subfig:settled_25_exp}, we have the case with the parameters above and $\chi_0 = 0.25$. In the numerical simulation in Fig.~\ref{subfig:settled_25_num}, the simulation predicts a settled regime. Behind the clear fluid front, we expect a sharp yet thin front of the larger red particles, followed by smaller light blue particles that make up the bulk of the slurry upstream. This is exactly what we see in Fig.~\ref{subfig:settled_25_exp}, although the clear fluid front predicted in the numerical simulation appears much less than what is actually observed in experiments. However, the strong fingering instability seen in Fig.~\ref{subfig:settled_25_exp} may contribute to this discrepancy.

\subsection{Settled regime, $\chi_0 = 0.5$}
\par Next, in Figs.~\ref{subfig:settled_50_num} and \ref{subfig:settled_50_exp}, we have the case where $\chi_0 = 0.5$. The numerical simulation in Fig.~\ref{subfig:settled_50_num} is similar to the previous case where we have a settled regime and a front of the larger red particles leading the particle front. However, this red front is thicker than the particle front we saw in the case where $\chi_0 = 0.25$. Additionally, the boundary of the larger red particle front smoothly blends with the smaller light blue particles as we continue to go upstream. Again, this is a different from the case in figure \ref{subfig:settled_25_exp} where the difference between the smaller and larger particle fronts is stark. We see in Fig.~\ref{subfig:settled_50_exp} this is the case. Upstream from the clear fluid front is the red particle front. By comparing this particle front in Fig.~\ref{subfig:settled_50_exp} to the particle front in Fig.~\ref{subfig:settled_25_exp}, we do see the particle front in Fig.~\ref{subfig:settled_50_exp} is thicker, and this is corroborated when comparing the numerical results in Fig.~\ref{subfig:settled_50_num} and Fig.~\ref{subfig:settled_25_num}. This large, red particle ridge blends with the small, light blue particles as we go upstream, which again is predicted in the numerical simulations.

\subsection{Settled regime, $\chi_0 = 0.75$}
\par Finally, Figs.~\ref{subfig:settled_50_num} and \ref{subfig:settled_50_exp} show the case where $\chi_0 = 0.75$. Once again, the larger red particles lead the particle front in the numerical simulation shown in Fig.~\ref{subfig:settled_75_num}. Similar to the case where $\chi_0 = 0.5$, the large, red particle front blends with the small, light blue particles as we go upstream and appears well-mixed near the top of the incline. This behavior is shown in Fig.~\ref{subfig:settled_75_exp}. The red particle ridge is distinctly thicker than the particle ridge in Figs.~\ref{subfig:settled_25_exp} and \ref{subfig:settled_50_exp}, and smoothly transitions to a well-mixed blend of red and light blue particles as predicted in the numerical simulations. Overall, our numerical simulations are not only able to predict the flow being settled, but also predict detailed behavior of the particle fronts.

\subsection{Ridged regime}
\par We also have an example of an experiment in the ridged regime in Fig.~\ref{subfig:ridged_exp}. There are not as many distinct fronts to capture in the ridged case compared to the settled case which is why the majority of our comparisons so far have focused on the settled case. We see our numerical simulation in Fig.~\ref{subfig:ridged_num} predicts the flow is in the ridged regime as the particles are present at the very front of the slurry. Again, the particle front is comprised mostly of the larger, red particles. Our numerical simulations also predict a singular shock. In the experimental results shown in Fig.~\ref{subfig:ridged_exp}, we see the particles at the very front of the flow establishing this experiment is in the ridged regime. Furthermore, the particle front is comprised mostly of the larger, red particles as we expect from the numerical simulations. 
\par We do not see the explicit presence of a singular shock at the particle front, which is expected. Our model is a hyperbolic conservation law where singular shocks can occur, while in reality there are surface tension effects which dampen shock formation. We can see in both the simulations and in the experiment the particle concentration at the front is much higher than that in the settled regime, a hallmark of the ridged regime. In the numerical simulations, we are able to determine the particle volume fraction approaches the particle maximum packing fraction. Although we are not able to determine this in the experiment, visual inspection of the red particle ridge in Fig.~\ref{subfig:ridged_exp} suggests the particle volume fraction also approaches the particle maximum packing fraction.

\begin{figure}
 \centering
 \subfloat[Numerical Simulation]{\label{subfig:settled_25_num}\includegraphics[width = 0.48\linewidth]{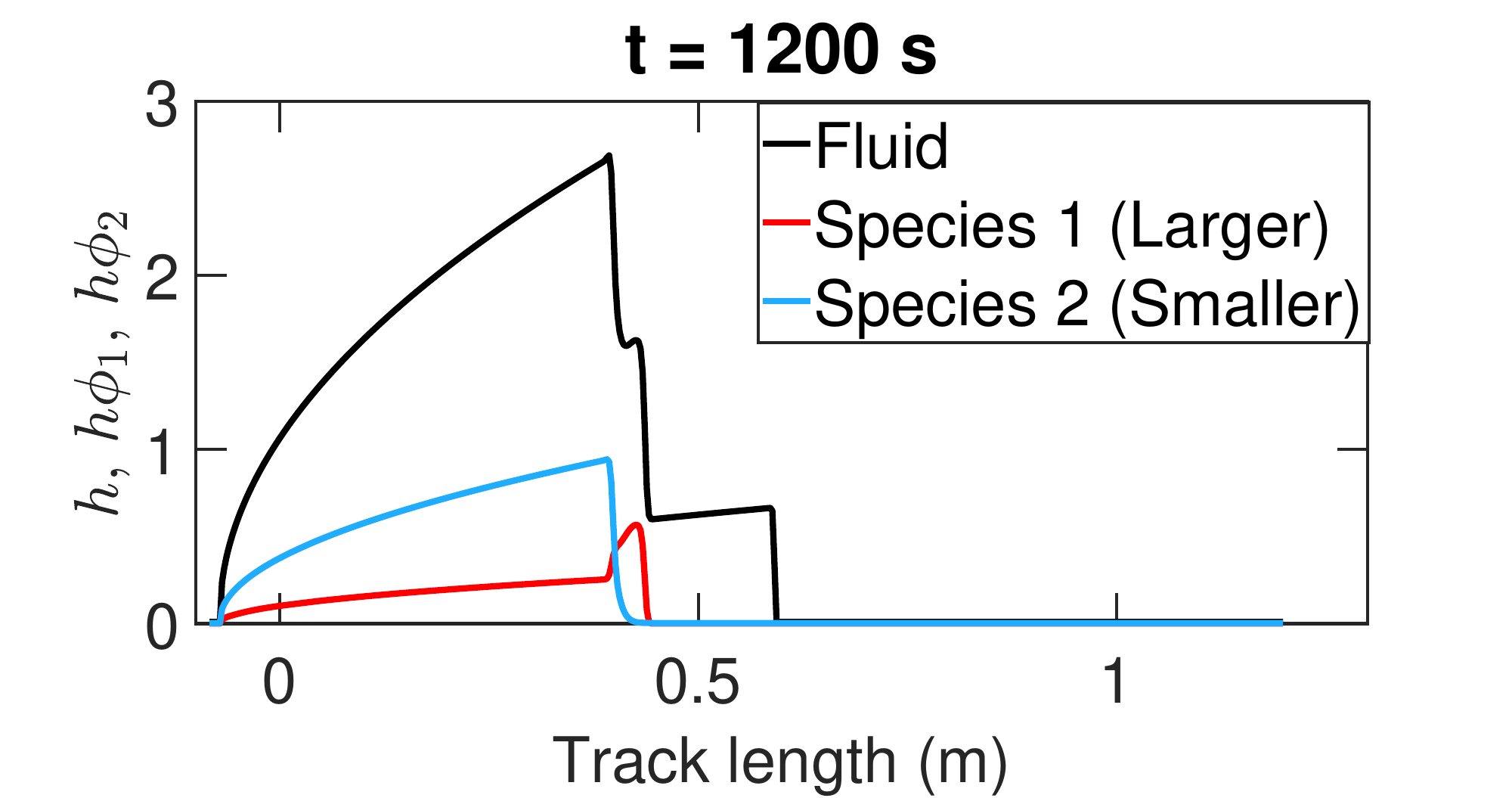}} ~
 \subfloat[Numerical Simulation]{\label{subfig:settled_50_num}\includegraphics[width = 0.48\linewidth]{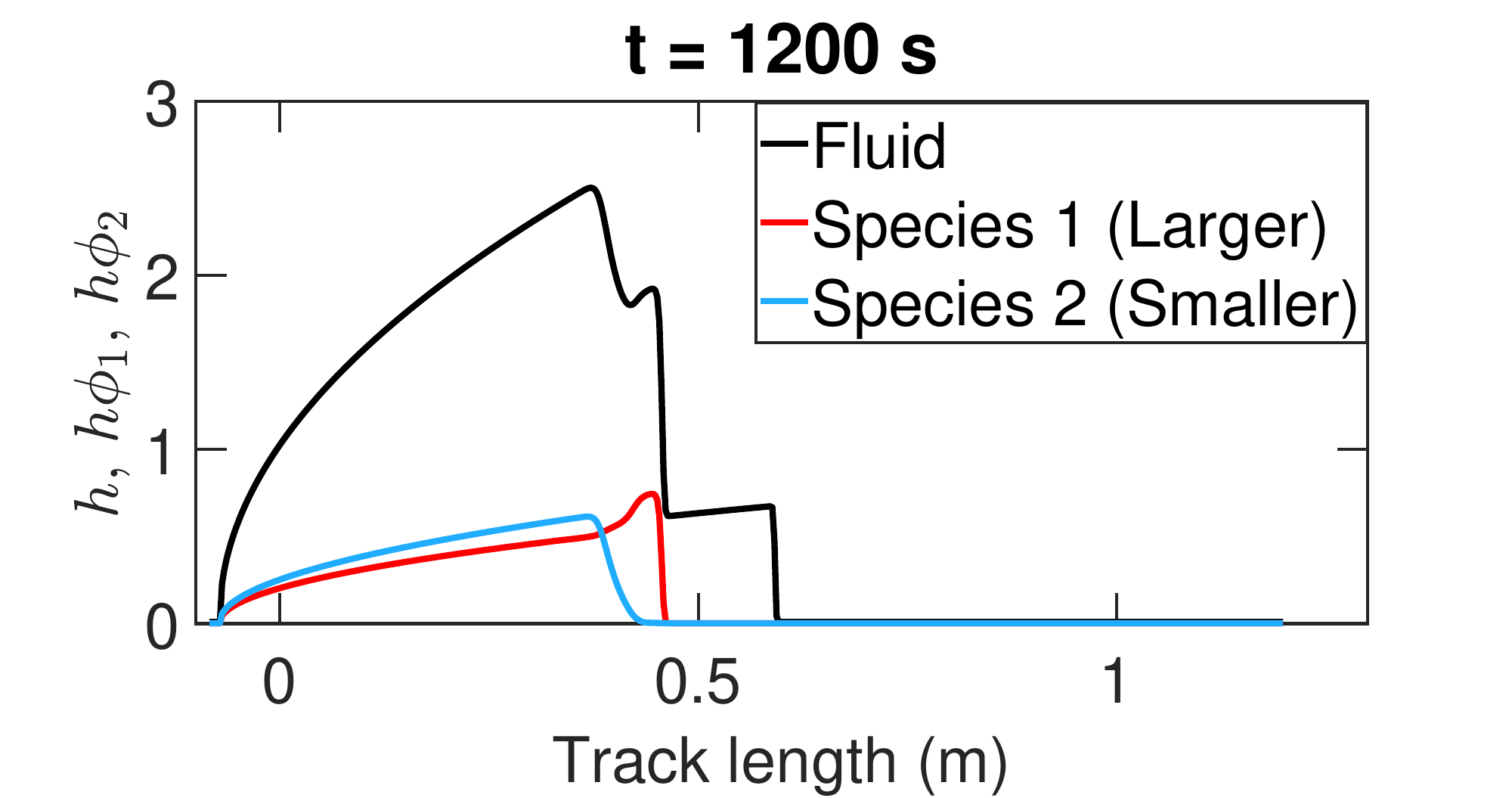}} \\
 \subfloat[Experiment]{\label{subfig:settled_25_exp}\includegraphics[width = 0.43\linewidth]{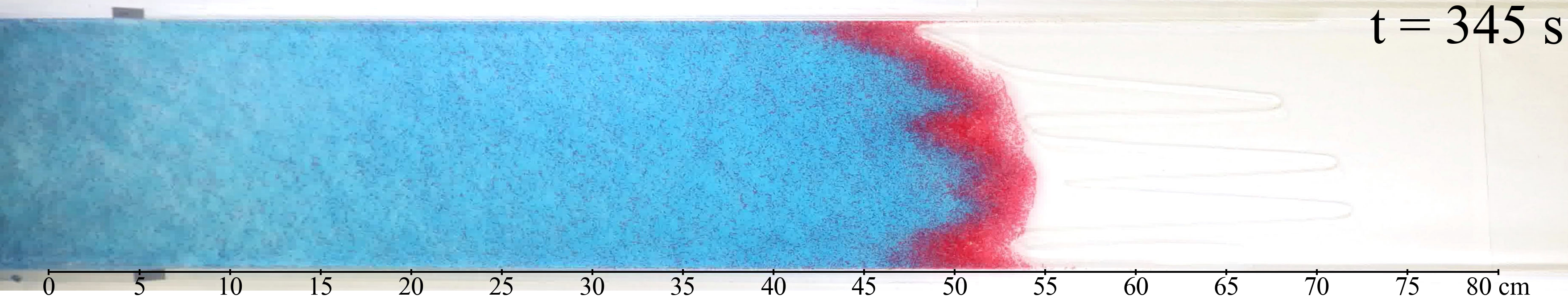}} \hspace{4em}
 \subfloat[Experiment]{\label{subfig:settled_50_exp}\includegraphics[width = 0.43\linewidth]{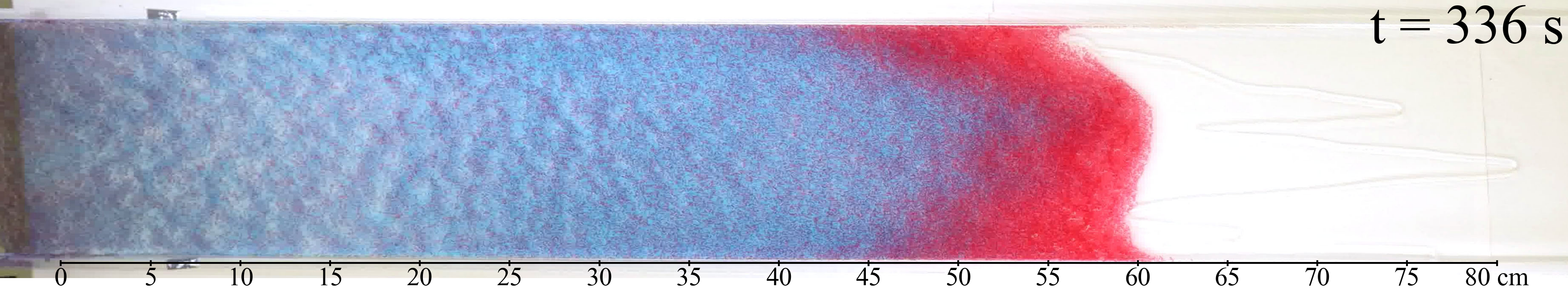}} \\
 \subfloat[Numerical Simulation]{\label{subfig:settled_75_num}\includegraphics[width = 0.48\linewidth]{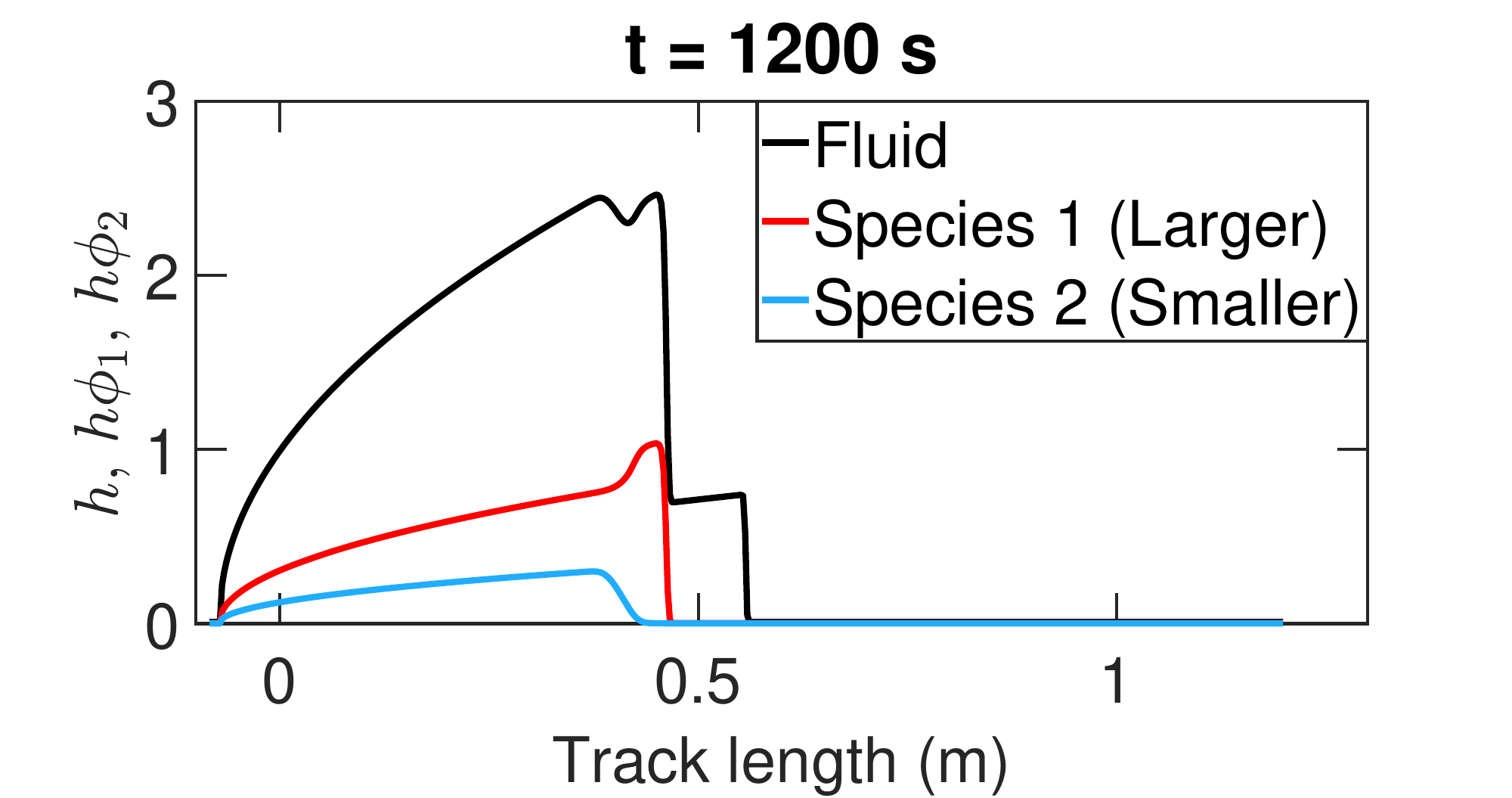}} ~
 \subfloat[Numerical Simulation]{\label{subfig:ridged_num}\includegraphics[width = 0.48\linewidth]{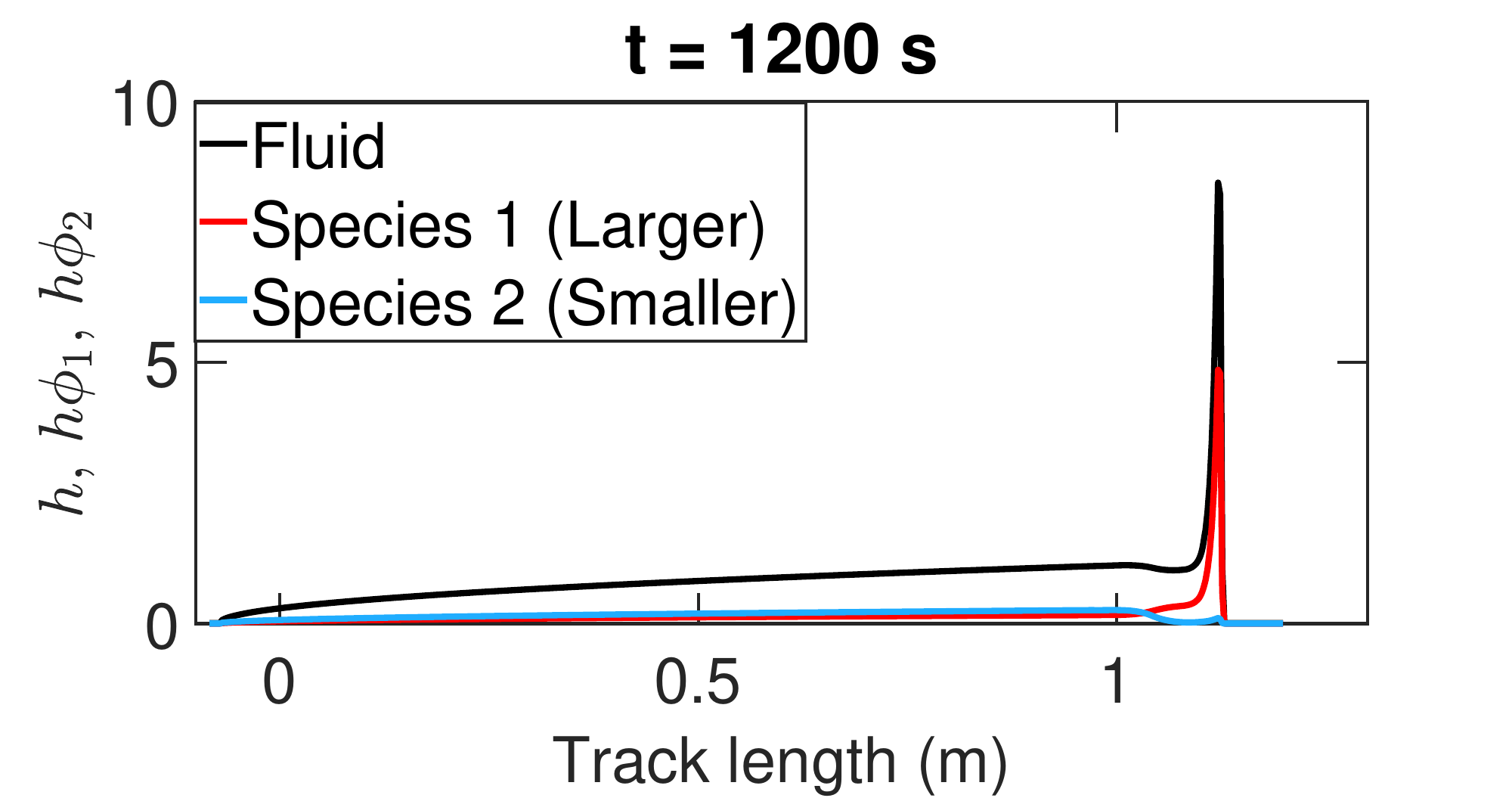}} \\
 \subfloat[Experiment]{\label{subfig:settled_75_exp}\includegraphics[width = 0.43\linewidth]{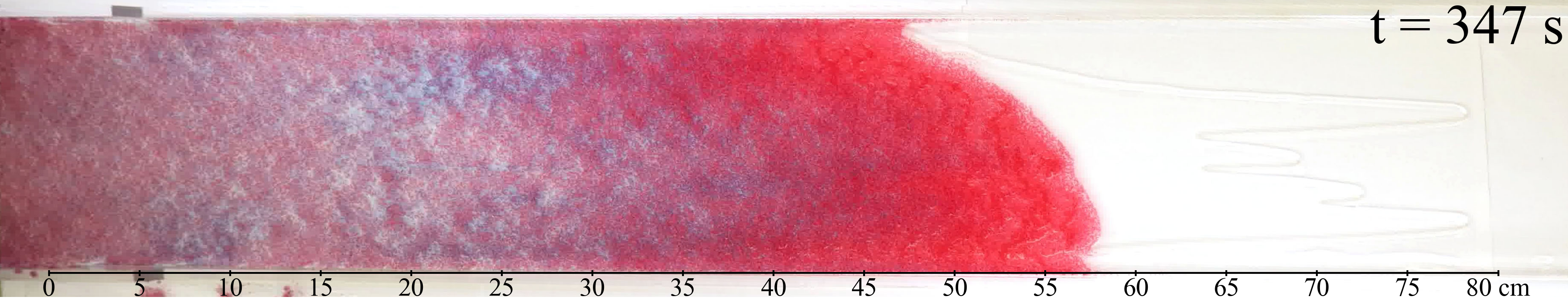}} \hspace{4em}
 \subfloat[Experiment]{\label{subfig:ridged_exp}\includegraphics[width = 0.43\linewidth]{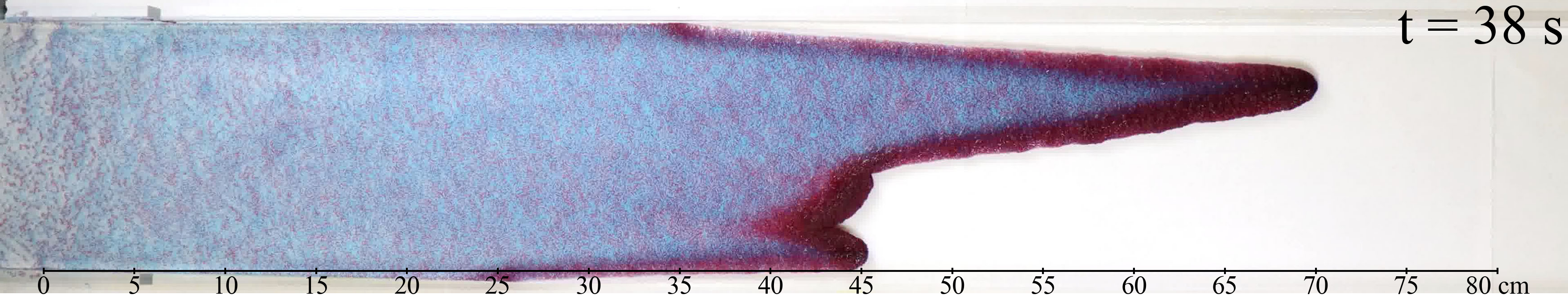}}
 \caption{Comparison between numerical simulations and experimental results for various parameter choices. Figs.~\ref{subfig:settled_25_num} and \ref{subfig:settled_25_exp}: $\phi_0 = 0.4$, $\chi_0 = 0.25$. Figs.~\ref{subfig:settled_50_num} and \ref{subfig:settled_50_exp}: $\phi_0 = 0.4$, $\chi_0 = 0.5$. Figs.~\ref{subfig:settled_75_num} and \ref{subfig:settled_75_exp}: $\phi_0 = 0.4$, $\chi_0 = 0.75$. Figs.~\ref{subfig:ridged_num} and \ref{subfig:ridged_exp}: $\phi_0 = 0.55$, $\chi_0 = 0.5$. The track is $1$ meter long. Note the simulation times do not match exactly to the experimental time.}
 \label{fig:tile_comparison}
\end{figure}

\section{Conclusion}\label{sec:conclusion}
\par In this work, we have a presented a conservation law model for particle laden flows on an incline of two differently sized species of particles by extending the thin film equilibrium and conservation law model explored by \cite{murisic_particle-laden_2011} and \cite{wong_conservation_2016}. For our equilibrium model, we expand upon the flux balances presented by \cite{wong_conservation_2016} but include coupling effects caused by the different sizes of particles introduced by \cite{kanehl_hydrodynamic_2015} and the dependence of the maximum packing fraction based on the presence of differently sized particles \cite{shauly_shear-induced_1998}. In our analysis of the equilibrium model, we find larger particles are expected to rise to the top of the profile regardless of whether the profile is in the settled or ridged regime. This observation is supported by our experimental results where both the larger particles rise to the top of side profiles during experiments and where larger particles congregate at the end of the particle front. This finding is similar to the Brazil-nut effect observed in granular flows. In our model, the Brazil-nut effect arises as a natural consequence in the equilibrium theory without having to make specific concessions for it when modeling the flux terms. Our setting is quite different than traditional granular flow as the particles are submerged in a viscous fluid and we are able to reduce the complexity of the model by leveraging thin film approximations. However, we pose as an open question if the Brazil-nut effect can arise naturally in granular flow models without the express need to include it in the mathematical model.

\par We next perform numerical simulations of the conservation law with a fixed volume initial condition and compare them with experimental results. Overall, our qualitative agreement is excellent. Not only is our model able to accurately predict whether the flow is in the settled or ridged regime, but our model is also able to predict finer features of the flow particularly in the settled regime such as particle ridge width and well-mixed behavior upstream. This qualitative matching is encouraging and suggests the overall modeling approach is sound.

\par However, we believe there are improvements to be made in the model. In our equilibrium models, although we adapt the shear-induced migration for two particle species of different sizes from \cite{kanehl_hydrodynamic_2015}, we use the same settling and tracer fluxes from \cite{wong_conservation_2016} except we change the particle diameters to be different depending on the species. It is unclear whether more accurate modeling needs to be done for these two effects. For example, the original formulation of tracer flux in \cite{wong_conservation_2016} has the property that the sum of tracer fluxes of the two species cancels each other out, which is not true in our formulation of the tracer flux because of the differing diameters of the particle species. Additionally, as stated previously, our settling flux does not include size effects. It would be interesting to include size effects in the settling flux and see whether the same qualitative features emerge in the numerical simulations.

\par This goal of this work is to provide a fundamental model that accurately describes particle-laden flow behavior in thin-films. Although we do not qualitatively match the front positions between numerical simulations and experiments, we are still able to present a model that is simple enough to solve numerically and make predictions that well match the behavior of the experiments. We also rigorously explain the phenomena of large particles rising to the free surface and appearing at the front of the particle region by proving that larger particles proportionally increase as the distance from the substrate increases. We are interested to see how the modeling choices or improvements suggested in the previous paragraph lead to different qualitative observations or quantitative agreements.

\begin{acknowledgments}
This material is based upon work supported by the U.S. National Science Foundation under award No. DMS-2407006. This
work is also supported by Simons Math + X Investigator Award number 510776. Sarah C. Burnett was supported by the 2022
L’Oréal USA for Women in Science Postdoctoral Fellowship. Preliminary experimental work supporting this research was done during the Research Experience for Undergraduates (REU) program at University of California, Los Angeles during Summer 2022,  2023, \& 2024. 
\par 
\end{acknowledgments}

\appendix

\section{}
We give the proof of lemma \ref{lem:A_ij}, stated again below.
\begin{lemma}
 In the coupling matrix $A$ when $d = 3$ defined in \eqref{eq:A_matrix},
 \[
 \frac{d_2^2}{4} < a_{12} < \frac{d_1^2}{4}
 \]
 where $a_{12}$ is the off diagonal entry.
\end{lemma}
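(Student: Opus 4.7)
The plan is to parametrize by the size ratio $r = d_1/d_2 > 1$ and reduce both bounds to an elementary polynomial inequality. Using the identity $1 + r^3 = (1+r)(1 - r + r^2)$ and setting $d = 3$ in \eqref{eq:A_matrix}, the off-diagonal entry simplifies cleanly to
\[
A_{12} \;=\; \frac{d_2^{\,2}\,(1+r)^4}{64\,(1 - r + r^2)},
\]
so the claim $d_2^2/4 < A_{12} < d_1^2/4 = r^2 d_2^2/4$ is equivalent to the double inequality $16(1 - r + r^2) < (1+r)^4 < 16\, r^2\,(1 - r + r^2)$ for all $r > 1$.

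For the lower bound I would expand and rearrange to the polynomial $p(r) = r^4 + 4 r^3 - 10 r^2 + 20 r - 15$. The crucial observation is that $p(1) = 0$, reflecting the trivial equality case $d_1 = d_2$, so $(r-1)$ divides $p$. Synthetic division gives $p(r) = (r-1)(r^3 + 5 r^2 - 5 r + 15)$, and the cubic factor is manifestly positive on $r \geq 1$ since $r^3 + 5 r^2 + 15 > 5 r$ there. Hence $p(r) > 0$ for $r > 1$ and the lower bound follows.

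For the upper bound, rather than redoing the factorization I would exploit a reciprocal symmetry. Letting $q(r) = 15 r^4 - 20 r^3 + 10 r^2 - 4 r - 1$ denote the corresponding upper-bound polynomial, a short check reveals $q(r) = -r^4\, p(1/r)$, which is natural because swapping $d_1 \leftrightarrow d_2$ interchanges the two bounds. For $r > 1$, the argument $s = 1/r$ lies in $(0,1)$, so $s - 1 < 0$, while the cubic $s^3 + 5 s^2 - 5 s + 15$ is bounded below by $10$ on $(0,1)$. Thus $p(1/r) < 0$, giving $q(r) > 0$ and the upper bound. The main obstacle is spotting the factorization $p(r) = (r-1)(r^3 + 5 r^2 - 5 r + 15)$ and verifying that the cubic factor stays positive on the relevant interval; the reciprocal identity $q(r) = -r^4 p(1/r)$ then hands one the upper bound for free, making the two inequalities essentially a single calculation.
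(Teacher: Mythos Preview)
Your proof is correct and follows essentially the same strategy as the paper: parametrize by the size ratio, clear denominators to obtain a polynomial inequality, factor out the root at the equal-size case, and verify that the residual cubic keeps a fixed sign. Two differences in execution are worth noting. First, you use the identity $1+r^3=(1+r)(1-r+r^2)$ up front to simplify $A_{12}$, which drops the polynomial degree from five to four; the paper instead works with $x=d_2/d_1=1-\epsilon\in(0,1)$ and factors the quintic $(1+x)^5-16x^2(1+x^3)$ as $-(x-1)(x+1)(15x^3-5x^2+5x+1)$, picking up an extra $(x+1)$ factor that your simplification removes in advance. Second, and more substantively, you observe the reciprocal symmetry $q(r)=-r^4p(1/r)$, which lets one inequality do the work of both; the paper treats the two bounds separately, handling $\tilde a_{12}<1$ by a crude numerator--denominator estimate and reserving the factor-and-check-the-cubic argument for the harder bound $\tilde a_{12}>(1-\epsilon)^2$. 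Your symmetry observation is the cleaner route and makes transparent why the two inequalities are really one statement under $d_1\leftrightarrow d_2$.
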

\begin{proof}
 We note we can divide $d_1^2/4$ from all terms in the desired inequality. For the off diagonal entry, we can factor out $d_1^2/4$ yielding
 \begin{align*}
 a_{12} = \frac{d_1^2}{4} \frac{(1 + d_2/d_1)^2}{2^{d+1}}\frac{(1 + d_2/d_1)^d}{1 + (\frac{d_2}{d_1})^d}.
 \end{align*}
 So, we wish to show the equivalent inequality
 \[
 \frac{d_2^2}{d_1^2} < \frac{(1 + d_2/d_1)^2}{2^{d+1}}\frac{(1 + d_2/d_1)^d}{1 + (\frac{d_2}{d_1})^d} < 1
 \]
 Now that we have reduced the inequality to one functional variable $\frac{d_2}{d_1}$, we set $\frac{d_2}{d_1} = 1 - \epsilon$, with $0 < \epsilon < 1$. This is justified as we assume $0 < d_2 < d_1$. Thus, we simplify the modified off diagonal entry as 
 \[
 \tilde{a}_{12} = \frac{(2 - \epsilon)^2}{2^{d+1}} \frac{(2 - \epsilon)^d}{1 + (1 - \epsilon)^d}.
 \]
 To see that $\tilde{a}_{12} < 1$, we observe
 \begin{align*}
 \frac{(2 - \epsilon)^2}{2^{d+1}} \frac{(2 - \epsilon)^d}{1 + (1 - \epsilon)^d} &< \frac{2^{d + 2}}{ 2 \cdot 2^{d+1} } \\
 &< 1.
 \end{align*}
 The other inequality, to show that $\tilde{a}_{12} > (1 - \epsilon)^2$ is trickier. We will show that when $\epsilon \in (0,1)$, the inequality
 \[
 \frac{(2 - \epsilon)^2}{2^{d+1}} \frac{(2 - \epsilon)^d}{1 + (1 - \epsilon)^d} - (1 - \epsilon)^2 > 0
 \]
 holds. This is equivalent to showing the polynomial
 \[
 (1 + x^5) - 2^4 x^2 (1 + x^3)
 \]
 where $x = 1 - \epsilon$ is positive on the domain $(0,1)$. We may factor the polynomial as 
 \[
 -(x-1)(x+1)(15x^3 - 5x^2 + 5x + 1).
 \]
 We note $-(x-1)(x+1) > 0$ when $x \in (0,1)$, so now all we have to do is verify the sign of $(15x^3 - 5x^2 + 5x + 1)$ on $(0,1)$. To do so, we will show the polynomial is monotone by examining its derivative, $5(9x^2 - 2x + 1)$. Since the discriminant $4 - 4 \cdot 9 \cdot 1 = -32$ is negative, the derivative has no real roots and since it is positive at $x = 0$, the derivative is positive for all $x$. Thus, the original polynomial s monotone increasing. The polynomial evaluated at $x = 0$ is $1$, and thus the polynomial is positive on the interval $(0,1)$, and we conclude $(1 - \epsilon)^2 < \tilde{a}_{12}^2$. This establishes the desired inequality $(1 - \epsilon)^2 < \tilde{a}_{12} < 1$.
\end{proof}

\bibliography{sources}

\end{document}